\documentclass[envcountsame]{llncs}
\usepackage[colorlinks=true, citecolor=blue,linkcolor=red,urlcolor=black]{hyperref}
\usepackage[utf8]{inputenc}
\usepackage{fullpage}
\usepackage{hhline}

\usepackage{xcolor,tikz,pgf,pgffor}
\usetikzlibrary{automata,shapes,backgrounds,decorations.pathreplacing}

\usepackage{amsmath}
\usepackage{amssymb,stmaryrd}
\usepackage{enumitem}
\setlength{\overfullrule}{10pt} % Bad boxes
\usepackage{multirow}
\usepackage{wrapfig}

\makeatletter
\let\c@definition\c@theorem
\let\c@lemma\c@theorem
\let\c@corollary\c@theorem
\let\c@remark\c@theorem
\let\c@example\c@theorem
\let\c@proposition\c@theorem
\let\c@problem\c@theorem
\makeatother

\SetSymbolFont{stmry}{bold}{U}{stmry}{m}{n}

\title{Parameterized complexity of games with monotonically ordered \texorpdfstring{$\omega$}{w}-regular objectives
\thanks{The three authors are supported by COST Action GAMENET CA 16228. V\'eronique Bruy\`ere and Jean-Fran\c cois Raskin are both supported by the FNRS PDR project ``Subgame perfection in graph games'' (T.0088.18). Quentin Hautem is supported by a FRIA fellowship (FNRS). Jean-Fran\c cois Raskin is supported by the ERC Starting Grant inVEST (279499), by the ARC project ``Non-Zero Sum Game Graphs: Applications to Reactive Synthesis and Beyond'' (F\'ed\'eration Wallonie-Bruxelles), and by the EOS project ``Verifying Learning Artificial Intelligence Systems'' (FNRS-FWO), and he is Professeur Francqui de Recherche funded by the Francqui foundation.}
}
\titlerunning{Parameterized complexity of games with monotonically ordered $\omega$-regular objectives}

\author{V\'eronique Bruy\`ere$^1$ \and Quentin Hautem$^1$ \and Jean-Fran\c{c}ois Raskin$^2$}	 
\institute{$^1$ D\'{e}partement d'informatique, Universit\'{e} de Mons (UMONS), Mons, Belgium \\
$^2$ D\'{e}partement d'informatique, Universit\'{e} libre de Bruxelles (U.L.B.),\\ Brussels, Belgium}

%Couleurs

%Ensembles de nombres

\newcommand{\N}{\mathbb{N}}

\newcommand{\C}{\mathbb{C}}

%Operations ensemblistes

%Restriction

%Ordre
\newcommand{\lexi}{\precsim}
\newcommand{\s}{s(n)}
\newcommand{\sprim}{s'(n)}

%Jeux
\newcommand{\playerOne}{\ensuremath{\mathcal{P}_1} } 
\newcommand{\playerTwo}{\ensuremath{\mathcal{P}_2 } }
\newcommand{\playerI}{\ensuremath{\mathcal{P}_i} }
   %jeu développé
\newcommand{\Plays}{\mathsf{Plays}}

\newcommand{\Out}{\mathsf{Out}}
\newcommand{\Obj}{\Omega}  %objectif
  %Winning set

\newcommand{\Occ}{\mathsf{Occ}}
\newcommand{\Occinf}{\mathsf{Inf}}

\newcommand{\payoff}{\mathsf{Payoff}}

%Objectifs
\newcommand{\Reach}{\mathsf{Reach}}
\newcommand{\Safe}{\mathsf{Safe}}
\newcommand{\Buchi}{\mathsf{Buchi}}
\newcommand{\CoBuchi}{\mathsf{CoBuchi}}
\newcommand{\GenReach}{\mathsf{GenReach}}
\newcommand{\GenBuchi}{\mathsf{GenBuchi}}
\newcommand{\Par}{\mathsf{Parity}}
\newcommand{\Rabin}{\mathsf{Rabin}}
\newcommand{\Streett}{\mathsf{Streett}}
\newcommand{\Muller}{\mathsf{Muller}}
\newcommand{\EMuller}{\mathsf{ExplMuller}}

\newcommand{\BoolBuchi}{\mathsf{BooleanBuchi}}
\newcommand{\UIBuchi}{\mathsf{UIBuchi}}
\newcommand{\UIReach}{\mathsf{UIReach}}
\newcommand{\UISafe}{\mathsf{UISafe}}
\newcommand{\unioninter}{UI}

%Mesures

%attracteurs

%values

%macro
\newcommand{\tproblem}{threshold problem}
\newcommand{\FPT}{\mathsf{FPT}}
\newcommand{\last}{\mathsf{Last}_1}
\newcommand{\M}{\mathsf{M}}
\newcommand{\Set}{S}

\begin{document}
  \maketitle

\begin{abstract} 
In recent years, two-player zero-sum games with multiple objectives have received a lot of interest as a model for the synthesis of complex reactive systems. In this framework, Player~1 wins if he can ensure that all objectives are satisfied against any behavior of Player~2. When this is not possible to satisfy all the objectives at once, an alternative is to use some preorder on the objectives according to which subset of objectives Player~1 wants to satisfy. For example, it is often natural to provide more significance to one objective over another, a situation that can be modelled with lexicographically ordered objectives for instance. Inspired by recent work on concurrent games with multiple $\omega$-regular objectives by Bouyer et al., we investigate in detail turned-based games with monotonically ordered and $\omega$-regular objectives. We study the threshold problem which asks whether player~1 can ensure a payoff greater than or equal to a given threshold w.r.t. a given monotonic preorder. As the number of objectives is usually much smaller than the size of the game graph, we provide a parametric complexity analysis and we show that our threshold problem is in $\FPT$ for all monotonic preorders and all classical types of $\omega$-regular objectives. We also provide polynomial time algorithms for B\"uchi, coB\"uchi and explicit Muller objectives for a large subclass of monotonic preorders that includes among others the lexicographic preorder. In the particular case of lexicographic preorder, we also study the complexity of computing the values and the memory requirements of optimal strategies.
\end{abstract}

\section{Introduction}
\label{sec:intro}

Two-player zero-sum games played on directed graphs form an adequate framework for the \emph{synthesis of reactive systems} facing an uncontrollable environment~\cite{PR89}. To model properties to be enforced by the reactive system within its environment, games with Boolean objectives and games with quantitative objectives have been studied, for example games with $\omega$-regular objectives~\cite{2001automata} and mean-payoff games~\cite{ZP96}. 

Recently, games with \emph{multiple} objectives have received a lot of attention since in practice, a system must usually satisfy several properties. In this framework, the system wins if it can ensure that {\em all} objectives are satisfied no matter how the environment behaves. For instance, generalized parity games are studied in~\cite{ChatterjeeHP07}, multi-mean-payoff games in~\cite{VelnerC0HRR15}, and multidimensional games with heterogeneous $\omega$-regular objectives in~\cite{BruyereHR16}. 

When multiple objectives are conflicting or if there does not exist a strategy that can enforce all of them at the same time, it is natural to consider trade-offs.  A general framework for defining trade-offs between $n$ (Boolean) objectives $ \Obj_1, \dots, \Obj_n$ consists in assigning to each infinite path $\pi$ of the game a payoff $v \in \{0,1\}^n$ such that $v(i)=1$ iff $\pi$ satisfies $\Obj_i$, and then to equip $\{0,1\}^n$ with a preorder $\precsim$ to define a preference between pairs of payoffs: $v \precsim v'$ whenever payoff $v'$ is preferred to payoff $v$. Because the ideal situation would be to satisfy {\em all} the objectives together, it is natural to assume that the preorder $\precsim$ has the following \emph{monotonicity} property: if $v'$ is such that whenever $v(i)=1$ then $v'(i)=1$, then it should be the case that $v'$ is preferred to $v$. 

As an illustration, let us consider a game in which Player~1 strives to enforce three objectives: $\Omega_1$, $\Omega_2$, and $\Omega_3$. Assume also that Player~$1$ has no strategy ensuring all three objectives at the same time, that is, Player~1 cannot ensure the objective $\Omega_1 \cap \Omega_2 \cap \Omega_3$. Then several options can be considered, see e.g.~\cite{BouyerBMU12}. First, we could be interested in a strategy of Player~1 ensuring a maximal subset of the three objectives. Indeed, a strategy that enforces both $\Omega_1$ and $\Omega_3$ should be preferred to a strategy that enforces $\Omega_3$ only. This preference is usually called the \emph{subset preorder}. Now, if $\Omega_1$ is considered more important than $\Omega_2$ itself considered more important than $\Omega_3$, then a strategy that ensures the most important possible objective should be considered as the most desirable. This preference is called the {\em maximize preorder}. Finally, we could also translate the relative importance of the different objectives into a {\em lexicographic preorder} on the payoffs:  satisfying $\Omega_1$ and $\Omega_2$ would be considered as more desirable than satisfying $\Omega_1$ and $\Omega_3$ but not $\Omega_2$. Those three examples are all monotonic preorders.

In this paper, we consider the following threshold problem: given a game graph $G$, a set of \emph{$\omega$-regular objectives}\footnote{We cover all classical $\omega$-regular objectives: reachability, safety, B\"uchi, co-B\"uchi, parity, Rabin, Streett, explicit Muller, or Muller.} $\Obj_1, \dots, \Obj_n$, a monotonic preorder $\precsim$ on the set $\{0,1\}^n$ of payoffs, and a threshold $\mu$, decide whether Player~1 has a strategy such that for all strategies of Player 2, the outcome of the game has payoff $v$ greater than or equal to $\mu$ (for the specified preorder), i.e. $\mu \precsim v$. As the number $n$ of objectives is typically much smaller than the size of the game graph $G$, it is natural to consider a parametric analysis of the complexity of the threshold problem in which the number of objectives and their size are considered to be fixed parameters of the problem. Our main results are as follows.

\medskip\noindent{\bf Contributions.~}
First, we provide {\em fixed parameter tractable solutions} to the threshold problem for {\em all} monotonic preorders and for {\em all} classical types of $\omega$-regular objectives. Our solutions rely on the following ingredients:
\begin{enumerate}
\item We show that solving the threshold problem is equivalent to \emph{solve a game with a single objective} $\Obj$ that is a union of intersections of objectives taken among $\Obj_1, \dots, \Obj_n$ (Theorem~\ref{thm:reduction}). This is possible by \emph{embedding} the monotonic preorder $\precsim$ in the subset preorder and by translating the threshold $\mu$ in preorder $\precsim$ into an antichain of thresholds in the subset preorder. A threshold in the subset preorder is naturally associated with a conjunction of objectives, and an antichain of thresholds leads to a union of such conjunctions.
\item We provide a fixed parameter tractable algorithm to solve games with a single objective $\Obj$ as described previously for all types of $\omega$-regular objectives $\Obj_1, \dots, \Obj_n$, leading to a \emph{fixed parameter algorithm for the threshold problem} (Theorem~\ref{thm:lexiFPT}). Those results build on the recent breakthrough of Calude et al. that provides a quasipolynomial time algorithm for parity games as well as their fixed parameter tractability~\cite{Calude}, and on the fixed parameter tractability of games with an objective defined by a \emph{Boolean combination of B\"uchi objectives} (Proposition~\ref{prop:FPT}).
\end{enumerate}

%\blue{version JF  \begin{enumerate}
%  	\item an embedding of any monotonic (pre)order in the subset order. This embedding relates a threshold for a monotonic order $\precsim$ to an antichain of thresholds for the subset order (Lemma~\ref{}).
%	\item as each threshold constraint in the subset order is naturally associated to a conjunction of objectives, each antichain of thresholds for the subset order is associated to single objective which is obtained as a disjunction of conjunctions of classical $\omega$-regular objectives, i.e.  (Theorem~\ref{thm:reduction}). 
%	\item we then provide a fixed parameter tractable algorithm to solve disjunctions of conjunctions of objectives for all claissical types of $\omega$-regular objectives (Proposition~\ref{}). In turn, this leads to a fixed parameter algorithm for our threshold problem (Theorem~\ref{}). Those results build on the recent breakthrough of Calude et al. that provides a quasipolynomial time algorithm for parity games as well as fixed parameter tractability for those games~\cite{Calude}. 
%  \end{enumerate}}
\noindent
Second, we consider games with a preorder $\precsim$ having a \emph{compact embedding}, with the main condition that the antichain of thresholds resulting from the embedding in the subset preorder is of {\em polynomial size}. The maximize preorder, the subset preorder, and the lexicographic preorder, given as examples above, all possess this property.
For games with a compact embedding, we go \emph{beyond fixed parameter tractability} as we are able to provide deterministic polynomial time solutions for B\"uchi, coB\"uchi, and explicit Muller objectives (Theorem~\ref{thm:poly}). Polynomial time solutions are not possible for the other types of $\omega$-regular objectives as we show that the threshold problem for the \emph{lexicographic preorder} with reachability, safety, parity, Rabin, Streett, and Muller objectives cannot be solved in polynomial time unless $\mathsf{P}=\mathsf{PSPACE}$ (Theorem~\ref{thm:homogeneoushyp}). 
Finally, we present a \emph{full picture} of the study of the lexicographic preorder for each studied objective. We give the exact complexity class of the \tproblem, show that we can obtain the values from the \tproblem\ (which thus yields a polynomial algorithm for B\"uchi, co-B\"uchi and Explicit Muller objectives, and an $\mathsf{FPT}$ algorithm for the other objectives) and provide tight memory requirements for the optimal and winning strategies (Table~\ref{table:homogeneous}).

\medskip\noindent{\bf Related work.~}
In~\cite{BouyerBMU12}, Bouyer et al. investigate concurrent games with multiple objectives leading to payoffs in $\{0,1\}^n$ which are ordered using Boolean circuits. While their threshold problem is slightly more general than ours, their games being concurrent and their preorders being not necessarily monotonic, the algorithms that they provide are nondeterministic and guess witnesses whose size depends polynomially not only in the number of objectives but also in the size of the game graph. Their algorithms are sufficient to establish membership to {$\mathsf{PSPACE}$} for all classical types of $\omega$-regular objectives but they do not provide a basis for the parametric complexity analysis of the threshold problem. In stark contrast, we provide deterministic algorithms whose complexity only depends polynomially in the size of the game graph. Our new deterministic algorithms are thus instrumental to a finer complexity analysis that leads to fixed parameter tractability for all monotonic preorders and all $\omega$-regular objectives.  We also provide tighter lower-bounds for the important special case of lexicographic preorder, in particular for parity objectives.

The particular class of games with multiple B\"uchi objectives ordered with the maximize preorder has been considered in~\cite{AlurKW08}. The interested reader will find in that paper clear practical motivations for considering multiple objectives and ordering them. The lexicographic ordering of objectives has also been considered in the context of quantitative games: lexicographic mean-payoff games in~\cite{BloemCHJ09}, some special cases of lexicographic quantitative games in~\cite{BruyereMR14,0001MPRW17}, and lexicographically ordered energy objectives in~\cite{ColcombetJLS17}.

In~\cite{AlmagorK17} and~\cite{KupfermanPV14}, the authors investigate partially (or totally) ordered specifications expressed in LTL. None of their complexity results leads to the results of this paper since the complexity is de facto much higher with objectives expressed in LTL. Moreover no $\mathsf{FPT}$ result is provided in those references.

\medskip\noindent{\bf Structure of the paper.~} In Section~\ref{sec:preliminaries}, we present all the useful notions about games with monotonically ordered $\omega$-regular objectives. In Section~\ref{sec:FPT}, we show that solving the threshold problem is equivalent to solve a game with a single objective that is a union of intersections of objectives (Theorem~\ref{thm:reduction}), and we establish the main result of this paper: the fixed parameter complexity of the threshold problem (Theorem~\ref{thm:lexiFPT}). Section~\ref{sec:homo} is devoted to games with a compact embedding and in particular to the threshold problem for lexicographic games. The last section is dedicated to the study of computing the values and memory requirements of optimal strategies in the case of lexicographic games (Table~\ref{table:homogeneous}).

%In Section~\ref{sec:preliminaries}, we present all the useful notions about games with monotonically ordered $\omega$-regular objectives, and we show that solving the threshold problem is equivalent to solve a game with a single objective that is a union of intersections of objectives (Theorem~\ref{thm:reduction}). Section~\ref{sec:FPT} is devoted to the main result of this paper: the fixed parameter complexity of the threshold problem (Theorem~\ref{thm:lexiFPT}). The last section is devoted to games with a compact embedding and in particular to the threshold problem for lexicographic games.

\section{Preliminaries} \label{sec:preliminaries}

We consider zero-sum turn-based games played by two players, $\playerOne$ and $\playerTwo$, on a finite directed graph. Given \emph{several objectives}, we associate with each play of this game a vector of bits called \emph{payoff}, the components of which indicate the objectives that are satisfied. The set of all payoffs being equipped with a \emph{preorder}, $\playerOne$ wants to ensure a payoff greater than or equal to a given threshold against any behavior of $\playerTwo$. In this section we give all the useful notions and the studied problem.

\paragraph{\bf Preorders. ~} Given some non-empty set $P$, a \emph{preorder} over $P$ is a binary relation $\precsim$ $\subseteq P \times P$ that is reflexive and transitive. The \emph{equivalence relation} $\sim$ associated with $\precsim$ is defined such that $x \sim y$ if and only if $x \precsim y$ and  $y \precsim x$. The \emph{strict partial order} $\prec$ associated with $\precsim$ is then defined such that $x \prec y$ if and only if $x \precsim y$ and $x \not\sim y$. A preorder $\precsim$ is \emph{total} if $x \precsim y$ or $y \precsim x$ for all $x, y \in P$. A set $S \subseteq P$ is \emph{upper-closed} if for all $x \in S$, $y \in P$, if $x \precsim y$, then $y \in S$. An \emph{antichain} is a set $S \subseteq P$ of pairwise incomparable elements, that is, for all $x, y \in S$, if $x \neq y$, then $x \not \precsim y$ and $y \not \precsim x$.

\paragraph{\bf Game structures.~} We give below the definition of a game structure and notations on plays.
%A \emph{game structure} is a tuple $G = (V_1,V_2,E)$ where $(V,E)$ is a finite directed graph, with $V = V_1 \cup V_2$ the set of vertices and $E \subseteq V \times V$ the set of edges such that for each $v \in V$, there exists $(v,v') \in E$ for some $v' \in V$ (no deadlock), and $(V_1,V_2)$ forms a partition of $V$ such that $V_i$ is the set of vertices controlled by player $\playerI$ with $i \in \{1,2\}$.
\begin{definition}
A \emph{game structure} is a tuple $G = (V_1,V_2,E)$ where
\begin{itemize}
\item $(V,E)$ is a finite directed graph, with $V = V_1 \cup V_2$ the set of vertices and $E \subseteq V \times V$ the set of edges such that\footnote{This condition guarantees that there is no deadlock. It can be assumed w.l.o.g. for all the problems considered in this article.} for each $v \in V$, there exists $(v,v') \in E$ for some $v' \in V$,
\item $(V_1,V_2)$ forms a partition of $V$ such that $V_i$ is the set of vertices controlled by player $\playerI$ with $i \in \{1,2\}$.
\end{itemize}
\end{definition}

A \emph{play} of $G$ is an infinite sequence of vertices $\pi = v_0 v_1 \ldots \in V^{\omega}$ such that $(v_k,v_{k+1}) \in E$ for all $k \in \N$. We denote by $\Plays(G)$ the set of plays in $G$. \emph{Histories} of $G$ are finite sequences $\rho = v_0 \ldots v_k \in V^+$ defined in the same way. %Given a play $\pi = v_0 v_1 \ldots$, the history $v_k \ldots v_{k+l}$ is denoted by $\pifactor{k}{k+l}$; in particular, $v_k = \pi[k]$. %We also use the notation $\pifactor{k}{\infty}$ for the suffix $v_k v_{k+1} \ldots$ of $\pi$. 
Given a play $\pi = v_0 v_1 \ldots$, the set $\Occ(\pi)$ denotes the set of vertices that occur in $\pi$, and the set $\Occinf(\pi)$ denotes the set of vertices visited infinitely often along $\pi$, i.e., $\Occ(\pi) = \{v \in V \mid \exists k \geq 0, v_k = v \}$ and $\Occinf(\pi) = \{v \in V \mid \forall k \geq 0, \exists l \geq k,\ v_l = v\}$. Given a set $U \subseteq V$ and a set $\Obj \subseteq V^{\omega}$, we denote by $U^c$ the set $V \setminus U$ and by $\overline{\Obj}$ the set $V^{\omega} \setminus \Obj$.

\paragraph{\bf Strategies.~} 

A \emph{strategy} $\sigma_i$ for $\playerI$ is a function $\sigma_i\colon V^*V_i \rightarrow V$ assigning to each history $\rho v \in V^*V_i$ a vertex $v' = \sigma_i(\rho v)$ such that $(v,v') \in E$. It is \emph{memoryless} if $\sigma_i(\rho v) = \sigma_i(\rho'v)$ for all histories $\rho v, \rho'v$ ending with the same vertex $v$, that is, if $\sigma_i$ is a function $\sigma_i\colon V_i \rightarrow V$. It is \emph{finite-memory} if it can be encoded by a deterministic \emph{Moore machine} ${\cal M} = (M, m_0, \alpha_u, \alpha_n)$ where $M$ is a finite set of states (the memory of the strategy), $m_0 \in M$ is the initial memory state, $\alpha_u\colon M \times V \rightarrow M$ is the update function, and $\alpha_n\colon M \times V_i \rightarrow V$ is the next-action function. The Moore machine $\cal M$ defines a strategy $\sigma_i$ such that $\sigma_i(\rho v) = \alpha_n(\widehat{\alpha}_u(m_0,\rho),v)$ for all histories $\rho v \in V^*V_i$, where $\widehat{\alpha}_u$ extends $\alpha_u$ to histories as expected. The \emph{size} of the strategy $\sigma_i$ is the size $|M|$ of its machine $\cal M$. Note that $\sigma_i$ is memoryless when $|M| = 1$.

The set of all strategies of $\playerI$ is denoted by $\Sigma_i$. 
Given a strategy $\sigma_i$ of $\playerI$, a play $\pi = v_0 v_1 \ldots$ of $G$ is \emph{consistent} with $\sigma_i$ if $v_{k+1} = \sigma_i(v_0 \ldots v_k)$ for all $k \in \N$ such that $v_k \in V_i$. Consistency is naturally extended to histories in a similar fashion. 
Given an \emph{initial vertex} $v_0$, and a strategy $\sigma_i$ of each player $\playerI$, we have a unique play consistent with both strategies $\sigma_1, \sigma_2$, called \emph{outcome} and denoted by $\Out(v_0,\sigma_1,\sigma_2)$.

\paragraph{\bf Single objectives and ordered objectives.~}

An \emph{objective for $\playerOne$} is a set of plays $\Obj \subseteq \Plays(G)$. A \emph{game} $(G,\Obj)$ is composed of a game structure $G$ and an objective~$\Obj$. A play $\pi$ is \emph{winning} for $\playerOne$ if $\pi \in \Obj$, and losing otherwise. As the studied games are zero-sum, $\playerTwo$ has the opposite objective $\overline{\Obj}$, meaning that a play $\pi$ is winning for $\playerOne$ if and only if it is losing for $\playerTwo$. Given a game $(G,\Obj)$ and an initial vertex $v_0$, a strategy $\sigma_1$ for $\playerOne$ is \emph{winning from} $v_0$ if $\Out(v_0,\sigma_1,\sigma_2) \in \Obj$ for all strategies $\sigma_2$ of $\playerTwo$. Vertex $v_0$ is thus called \emph{winning} for $\playerOne$. We also say that $\playerOne$ is winning from $v_0$ or that he can \emph{ensure} $\Obj$ \emph{from} $v_0$. Similarly the winning vertices of $\playerTwo$ are those from which $\playerTwo$ can ensure his objective $\overline{\Obj}$. 

A game $(G,\Obj)$ is \emph{determined} if each of its vertices is either winning for $\playerOne$ or winning for $\playerTwo$. Martin's theorem~\cite{Martin75} states that all games with Borel objectives are determined. The problem of \emph{solving a game} $(G,\Obj)$ means to decide, given an initial vertex $v_0$, whether $\playerOne$ is winning from $v_0$ (or dually whether $\playerTwo$ is winning from $v_0$ when the game is determined).  

Instead of a \emph{single} objective $\Obj$, one can consider \emph{several} objectives $\Obj_1,\ldots,\Obj_n$ that are \emph{ordered} with respect to a preorder $\precsim$ over $\{0,1\}^n$ in the following way. We first define the payoff of a play as a vector\footnote{Note that in the sequel, we often manipulate equivalently vectors in $\{0,1\}^n$ and sequences of $n$ bits.} of bits the components of which indicate the objectives that are satisfied. 
%Formally, given $n$ objectives $\Obj_1,\ldots,\Obj_n \subseteq \Plays(G)$, the \emph{payoff} function $\mathsf{Payoff} \colon \Plays(G) \rightarrow \{0,1\}^n$ assigns a vector of bits to each play $\pi \in \Plays(G)$, where for all $k \in \{1, \ldots, n\}$, $\payoff_k(\pi) = 1$ if $\pi \in \Obj_k$ and $0$ otherwise.

\begin{definition}
Given a game structure $G = (V_1,V_2,E)$, and $n$ objectives $\Obj_1,\ldots,\Obj_n \subseteq \Plays(G)$, the \emph{payoff} function $\mathsf{Payoff} \colon \Plays(G) \rightarrow \{0,1\}^n$ assigns a vector of bits to each play $\pi \in \Plays(G)$, where for all $k \in \{1, \ldots, n\}$, $\payoff_k(\pi) = 1$ if $\pi \in \Obj_k$ and $0$ otherwise.
\end{definition}

Given the preorder $\precsim$ over $\{0,1\}^n$, $\playerOne$ prefers a play $\pi$ to a play $\pi'$ whenever $\mathsf{Payoff}(\pi') \precsim \mathsf{Payoff}(\pi)$. We call \emph{ordered game} the tuple $(G,\Obj_1,\ldots, \Obj_n,\precsim)$, the payoff function of which is defined w.r.t. the objectives $\Obj_1, \ldots, \Obj_n$ and its values are ordered with $\precsim$.
In this context, we are interested in the following problem.

\begin{problem} \label{prob:threshold}
The \emph{threshold problem} for ordered games $(G,\Obj_1,\ldots, \Obj_n,\precsim)$ asks, given a threshold $\mu \in \{0,1\}^n$ and an initial vertex $v_0 \in V$, to decide whether $\playerOne$ (resp. $\playerTwo$) has a strategy to ensure the objective $\Obj = \{\pi \in \Plays(G) \mid \payoff(\pi) \succsim \mu \}$ from $v_0$ (resp. $\overline{\Obj} = \{\pi \in \Plays(G) \mid \payoff(\pi) \not\succsim \mu \}$).\footnote{Note that when $n = 1$ and $\precsim$ is the usual order $\leq$ over $\{0,1\}$, we recover the notion of single objective with the threshold $\mu = 1$.} 
\end{problem}
In case $\playerOne$ (resp. $\playerTwo$) has such a winning strategy, we also say that he can \emph{ensure} (resp. \emph{avoid}) \emph{a payoff} $\succsim \mu$. 

Classical examples of preorders are the following ones~\cite{BouyerBMU12}. Let $x, y \in \{0,1\}^n$.
\begin{itemize}
\item \emph{Counting}: $x \precsim y$ if and only if $|\{j \mid x_j = 1\}| \leq |\{j \mid y_j = 1\}|$. The aim of $\playerOne$ is to maximize the number of satisfied objectives.
\item \emph{Subset}: $x \precsim y$ if and only if $\{j \mid x_j = 1\} \subseteq \{j \mid y_j = 1\}$. The aim of $\playerOne$ is to maximize the subset of satisfied objectives with respect to the inclusion.
\item \emph{Maximise}: $x \precsim y$ if and only if $\max \{j \mid x_j = 1\} \leq \max \{j \mid y_j = 1\}$. The aim of $\playerOne$ is to maximize the higher index of the satisfied objectives.
\item \emph{Lexicographic}: $x \precsim y$ if and only if either $x = y$ or $\exists j \in \{1,\ldots,n\}$ such that $x_j < y_j$ and $\forall k \in \{1,\ldots,j-1\}$, $x_k = y_k$. The objectives are ranked according to their importance. The aim of $\playerOne$ is to maximise the payoff with respect to the induced lexicographic order.
\end{itemize}

\begin{figure}[h]
	\centering
    \begin{tikzpicture}[scale=4]
    \everymath{\scriptstyle}
    \draw (0,0) node [rectangle, inner sep=5pt, draw] (A) {$v_0$};
    \draw (0.5,0.15) node [circle, draw] (B) {$v_1$};
    \draw (0.5,-0.15) node [circle, draw] (C) {$v_2$};
    
    \draw[->,>=latex] (A) to (B);
    
    \draw[->,>=latex] (A) to (C);
    \draw[->,>=latex] (C) to[bend left] (A);
    
    \draw[->,>=latex] (C) .. controls +(45:0.3cm) and +(315:0.3cm) .. (C);
    \draw[->,>=latex] (B) .. controls +(45:0.3cm) and +(315:0.3cm) .. (B);
	\path (0,0.15) edge [->,>=latex] (A);    
    
    \end{tikzpicture}
  \caption{A simple lexicographic game.}
  \label{fig:exemples}
\end{figure}
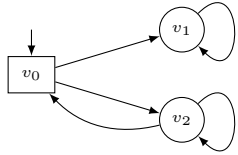

In this article, we \emph{focus on monotonic preorders}. A preorder $\precsim$ is \emph{monotonic} if it is compatible with the subset preorder, i.e. if $\{i \mid x_i = 1\} \subseteq \{i \mid y_i = 1\}$ implies $x \precsim y$. Hence a preorder is monotonic if satisfying more objectives never results in a lower payoff value. This is a \emph{natural property} shared by all the examples of preorders given previously.

\begin{example}
Consider the game structure $G$ depicted on Figure~\ref{fig:exemples}, where circle vertices belong to $\playerOne$ and square vertices belong to $\playerTwo$. We consider the ordered game $(G,\Obj_1,\Obj_2,\precsim)$ with $\Obj_i = \{\pi \in \Plays(G) \mid v_i \in \Occinf(\pi) \}$ for $i = 1,2$ and the lexicographic preorder $\precsim$. Therefore the function $\payoff$ assigns value $1$ to each play $\pi$ on the first (resp. second) bit if and only if $\pi$ visits infinitely often vertex $v_1$ (resp. $v_2$). In this ordered game, $\playerOne$ has a strategy to ensure a payoff $\succsim 01$ from $v_0$. Indeed, consider the memoryless strategy $\sigma_1$ that loops in $v_1$ and in $v_2$. Then, from $v_0$, $\playerTwo$ decides to go either to $v_1$ leading to the payoff $10$, or to $v_2$ leading to the payoff $01$. As $10 \succsim 01$, this shows that any play $\pi$ consistent with $\sigma_1$ satisfies $\payoff(\pi) \succsim 01$. Notice that while $\playerOne$ can ensure a payoff $\succsim 01$ from $v_0$, he has no strategy to enforce the single objective $\Obj_1$ and similarly no strategy to enforce $\Obj_2$.    
\end{example}

\paragraph{\bf Homogeneous $\omega$-regular objectives.~}

In the sequel of this article, given a monotonically ordered game $(G,\Obj_1,\ldots, \Obj_n,\precsim)$, we want to study the threshold problem described in Problem~\ref{prob:threshold} for \emph{homogeneous $\omega$-regular objectives}, in the sense that all the objectives $\Obj_1,\ldots,\Obj_n$ are of the same type, and taken in the following list of well-known $\omega$-regular objectives. 

\noindent Given a game structure $G = (V_1,V_2,E)$ and a subset $U$ of $V$ called \emph{target set}:
\begin{itemize}
\item The \emph{reachability objective} asks to visit a vertex of $U$ at least once, i.e. 
$\Reach(U) = \{ \pi \in \Plays(G) \mid \Occ(\pi) \cap U \ne \emptyset \}$.
\item The \emph{safety objective} asks to always stay in the set $U$, i.e. 
$\Safe(U) = \{ \pi \in \Plays(G) \mid \Occ(\pi) \cap U^c = \emptyset \}$.
\item The \emph{B\"uchi objective} asks to visit infinitely often a vertex of $U$, i.e.
$\Buchi(U) = \{ \pi \in \Plays(G) \mid \Occinf(\pi) \cap U \ne \emptyset \}$.
\item The \emph{co-B\"uchi objective} asks to eventually always stay in the set $U$, i.e.
$\CoBuchi(U) = \{\pi \in \Plays(G) \mid \Occinf(\pi) \cap U^c = \emptyset \}$.
\end{itemize}
Given a \emph{family} $\mathcal{F} = (F_i)_{i=1}^k$ of sets $F_i \subseteq V$, and a family of \emph{pairs} $((E_i,F_i)_{i=1}^k)$, with $E_i,F_i \subseteq V$:
\begin{itemize}
\item The \emph{explicit Muller objective} asks that the set of vertices seen infinitely often is exactly one among the sets of $\mathcal{F}$, i.e. $\EMuller(\mathcal{F}) = \{ \pi \in \Plays(G) \mid \exists i \in \{1,\ldots,k\}, \Occinf(\pi) = F_i \}.$
\item The \emph{Rabin objective} asks that there exists a pair $(E_i,F_i)$ such that a vertex of $F_i$ is visited infinitely often while no vertex of $E_i$ is visited infinitely often, i.e.
$\Rabin((E_i,F_i)_{i=1}^k) = \{\pi \in \Plays(G) \mid \exists i \in \{1,\ldots,k\}, \Occinf(\pi) \cap E_i = \emptyset \text{ and }  \Occinf(\pi) \cap F_i \ne \emptyset\}$.
\item The \emph{Streett objective} asks that for each pair $(E_i,F_i)$, a vertex of $E_i$ is visited infinitely often or no vertex of $F_i$ is visited infinitely often, i.e.
$\Streett((E_i,F_i)_{i=1}^k) = \{\pi \in \Plays(G) \mid \forall i \in \{1,\ldots,k\}, \Occinf(\pi) \cap E_i \ne \emptyset \text{ or } $ $ \Occinf(\pi) \cap F_i = \emptyset\}$.
\end{itemize}
Given a \emph{coloring} function $p \colon V \rightarrow \{0,\ldots,d\}$ that associates with each vertex a color, and $\mathcal{F} = (F_i)_{i=1}^k$ a family of subsets $F_i$ of $p(V)$:
\begin{itemize}
\item The \emph{parity objective} asks that the minimum color seen infinitely often is even, i.e.
$\Par(p) = \{ \pi \in \Plays(G) \mid \min_{v \in \Occinf(\pi)} p(v) \text{ is even}\}$.
\item The \emph{Muller objective} asks that the set of colors seen infinitely often is exactly one among the sets of $\mathcal{F}$, i.e.
$\Muller(p,\mathcal{F}) = \{ \pi \in \Plays(G) \mid \exists i \in \{1,\ldots,k\}, p(\Occinf(\pi)) = F_i \}.$
\end{itemize}

In the sequel, we make the \emph{assumption} that the considered preorders are monotonic, and by \emph{ordered game}, we always mean monotonically ordered games.
When the objectives of an ordered game are of kind $X$, we speak of an \emph{ordered $X$ game}, or of a \emph{$\precsim$ $X$ game} if we want to specify the used preorder $\precsim$.   
As already mentioned, when $n = 1$, an ordered game (with $\precsim$ equal to $\leq$) resumes to a game $(G,\Obj)$ with a single objective $\Obj$, that is traditionally called an $\Obj$ game. For instance, an ordered game $(G,\Obj_1,\ldots, \Obj_n,\precsim)$ where $\Obj_1, \ldots, \Obj_n$ are reachability objectives and $\precsim$ is the lexicographic preorder is called a lexicographic reachability game, and when $n = 1$ $(G,\Obj_1)$ is called a reachability game.

Note that given an ordered game with $n$ non-homogeneous $\omega$-regular objectives $\Obj_i$, we can always construct a new equivalent ordered parity game, since each objective $\Obj_i$ can be translated into a parity objective~\cite{2001automata}.

\paragraph{\bf Useful results on games with a single objective.}

Let us end this section by providing some results on games with a single $\omega$-regular objective taken among those defined previously or among the additional ones given herafter. All these results will be useful in the proofs.

Let $G$ be a game structure and $U_1, \ldots, U_m$ be $m$ target sets and $\phi$ be a Boolean formula over variables $x_1, \ldots, x_m$. We say that a play $\pi$ satisfies $(\phi,U_1, \ldots, U_m)$ if the truth assignment (~$x_i = 1$ if and only if $\Occinf(\pi) \cap U_i \neq \emptyset$, and $x_i = 0$ otherwise~) satisfies $\phi$.  
\begin{itemize}
\item \emph{Boolean combination of B\"uchi objectives}, or shortly \emph{Boolean B\"uchi} objective: 
$$\BoolBuchi(\phi,U_1,\ldots,U_m) = \{ \pi \in \Plays(G) \mid \pi \mbox{ satisfies } (\phi,U_1, \ldots, U_m) \}.$$
\end{itemize}
All operators $\vee$, $\wedge$, $\neg$ are allowed in Boolean B\"uchi objectives. However we denote by $|\phi|$ the \emph{size} of $\phi$ equal to the number of disjunctions and conjunctions inside $\phi$, and we say that the Boolean B\"uchi objective $\BoolBuchi(\phi,U_1,\ldots,U_m)$ is \emph{of size $|\phi|$ and with $m$ variables}. The definition of $|\phi|$ is not the classical one that usually counts the number of operators $\vee, \wedge, \neg$ and variables. This is not a restriction since one can transform any Boolean formula $\phi$ into one such that negations only apply on variables. 

We need to introduce some other kinds of $\omega$-regular objectives with Boolean combinations of objectives that are limited to
\begin{itemize}
\item intersections of objectives: like a \emph{generalized reachability} objective $\GenReach(U_1,\ldots,U_m)$ or a \emph{generalized B\"uchi} objective $\GenBuchi(U_1,\ldots,U_m)$,
\item unions of intersections (\unioninter) of objectives: like a \emph{\unioninter\  reachability} objective $$\UIReach(U_{1,1},\ldots,U_{l,m}) =  \cup_{i=1}^l\cap_{j=1}^m \Reach(U_{i,j}),$$ 
a \emph{\unioninter\  safety} objective $\UISafe(U_{1,1},\ldots,U_{l,m})$, or a \emph{\unioninter\  B\"uchi} objective $\UIBuchi(U_{1,1},\ldots,U_{l,m})$.
\end{itemize}

Games $(G,\Obj)$ with $\omega$-regular objectives $\Obj$ are determined by Martin's theorem~\cite{Martin75}. We recall the complexity class of solving those games, as well as the kind (memoryless, finite-memory) of winning strategies for both players. See Theorem~\ref{thm:gameresults} and Table~\ref{table:classicalobjectives} below. For each type of objective, the complexity of the algorithms is expressed in terms of the sizes $|V|$ and $|E|$ of the game structure $G$, the number $d$ of colors (for $\Par$ and $\Muller$), the number $k$ of pairs (for $\Rabin$ and $\Streett$), the size $|\mathcal{F}|$ of the family $\mathcal{F}$ (for $\EMuller$ and $\Muller$), the size $|\phi|$ of the formula $\phi$ (for $\BoolBuchi$), the number $m$ of intersections of objectives (for $\GenReach$ and $\GenBuchi$), and the number $m$ (resp. $l$) of intersections (resp. unions) in \unioninter\ objectives (for $\UIReach, \UISafe$, and $\UIBuchi$). 

\begin{theorem}\label{thm:gameresults} For games $(G,\Obj)$ with $\omega$-regular objectives, we have:
\begin{itemize}
\item Solving reachability or safety games is $\mathsf P$-complete (with an algorithm in $O(|V|+|E|)$ time) and both players have memoryless winning strategies {\rm\cite{Beeri80,2001automata,Immerman81}}.
\item Solving B\"uchi or co-B\"uchi games is $\mathsf P$-complete (with an algorithm in $O(|V|^2)$ time)  and both players have memoryless winning strategies  {\rm\cite{ChatterjeeH14,EmersonJ91,Immerman81}}. 
\item Solving explicit Muller games with a family $\mathcal F$ is $\mathsf{P}$-complete (with an algorithm in $O(|\mathcal{F}| \cdot (|\mathcal{F}| + |V| \cdot |E|)^2)$ time) and exponential memory strategies are necessary and sufficient for both players {\rm\cite{DziembowskiJW97,Horn08}}.
\item Solving Rabin (resp. Streett) games with $k$ pairs is $\mathsf{NP}$-complete (resp. co-$\mathsf{NP}$-complete) {\rm\cite{EmersonJ88}} (with an algorithm in $O(|V|^{k+1} \cdot k!)$ time~{\rm\cite{PitermanP06}}). In Rabin games (resp. Streett games) memoryless strategies are sufficient for $\playerOne$ (resp. for $\playerTwo$) {\rm\cite{Emerson85}} and exponential memory strategies are necessary and sufficient for $\playerTwo$ (resp. $\playerOne$) {\rm \cite{DziembowskiJW97}} 
\item Solving parity games with $d$ colors is in $\mathsf{UP}\cap\mathsf{co}$-$\mathsf{UP}$ (with an algorithm in $O(|V|^{\lceil \log(d) \rceil+6})$ time~{\rm\cite{Calude}}) and both players have memoryless winning strategies {\rm\cite{Jurdzinski98}}.
\item Solving Muller games is $\mathsf{PSPACE}$-complete (with an algorithm in $O(|V|^2 \cdot |E| \cdot |V|!)$ time~{\rm \cite{McNaughton93}}) and exponential memory strategy are necessary and sufficient for both players {\rm\rm{\cite{DziembowskiJW97,HunterD05}}}.
\item Solving Boolean B\"uchi games is $\mathsf{PSPACE}$-complete (with an algorithm in $O(|\phi| \cdot 2^{O(|V|^2)})$ time and exponential memory strategies are necessary and sufficient for both players {\rm \cite{AlurTM03}}.\footnote{The algorithm complexity and the memory requirements do not appear explicitly in {\rm\cite{AlurTM03}} but can be deduced straightforwardly thanks to the proposed algorithm.}
\item Solving generalized reachability games with $m$ target sets is $\mathsf{PSPACE}$-complete (with an algorithm in $O(2^m\cdot (|V| + |E|))$ time) and exponential memory strategies are necessary and sufficient for both players~{\rm\cite{FijalkowH13}}.
\item Solving generalized B\"uchi games with $m$ target sets is $\mathsf P$-complete (with an algorithm in $O(m \cdot |V|^2)$ time) and linear memory (resp. memoryless) strategies are necessary and sufficient for $\playerOne$ (resp. $\playerTwo$)  {\rm\cite{ChatterjeeDHL16}}.
\item Solving \unioninter\ reachability and \unioninter\ safety objectives is $\mathsf{PSPACE}$-complete (with an algorithm in $O(2^K\cdot (|V| + |E|))$ time) and exponential memory strategies are necessary and sufficient for both players, where $K$ denotes the number of distinct target sets.
\item Solving \unioninter\ B\"uchi games with an objective $\cup_{i=1}^l\cap_{j=1}^m \Buchi(U_{i,j})$ is $\mathsf{coNP}$-complete (with an algorithm in $O(m^l \cdot |V|^2)$ time), and exponential memory (resp. memoryless) strategies are necessary and sufficient for $\playerOne$ (resp. $\playerTwo$) {\rm\cite{BloemCGHJ10}}.
\end{itemize}
\end{theorem}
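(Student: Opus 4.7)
The statement is a compilation of complexity class, algorithmic running time, and memory-requirement results for games with a single $\omega$-regular objective. For essentially every bullet, the plan is simply to invoke, one by one, the citation printed beside that item: the claims for reachability, safety, B\"uchi, co-B\"uchi, parity, Rabin, Streett, Muller, explicit Muller, generalised reachability, generalised B\"uchi, Boolean B\"uchi, and \unioninter\ B\"uchi objectives are all stated in the cited references (with, where needed, a short sentence pointing out where in the reference the memory bound and the precise runtime can be read off; for the Boolean B\"uchi bullet this additional observation is already flagged by the footnote). The genuinely novel items that require an original argument are the two bullets on \unioninter\ reachability and \unioninter\ safety objectives, for which no reference is provided.

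For the algorithmic upper bound on \unioninter\ reachability with objective $\bigcup_{i=1}^l\bigcap_{j=1}^m \Reach(U_{i,j})$, I would build a product game whose states are pairs $(v,b) \in V \times \{0,1\}^K$ where $K$ is the number of pairwise distinct sets among the $U_{i,j}$, and $b$ monotonically records which of those sets have been visited so far. The original objective then becomes a plain reachability objective in the product game, namely reaching any $(v,b)$ such that $b$ satisfies the Boolean formula $\bigvee_{i=1}^l\bigwedge_{j=1}^m x_{i,j}$. Applying the linear-time reachability algorithm of the first bullet to this product yields the claimed $O(2^K\cdot(|V|+|E|))$ running time, and memoryless winning strategies in the product translate back into strategies using $2^K$ memory states in $G$. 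The \unioninter\ safety case reduces to \unioninter\ reachability by Martin determinacy: the complement of a \unioninter\ safety objective for \playerOne\ is a \unioninter\ reachability objective for \playerTwo\ on the same structure.

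For the matching PSPACE lower bound and the exponential memory lower bound on both \unioninter\ reachability and \unioninter\ safety, the plan is to observe that generalised reachability is the special case $l=1$ of \unioninter\ reachability; by~\cite{FijalkowH13} generalised reachability is already PSPACE-hard and already requires exponential memory for both players, so the same lower bounds transfer trivially to \unioninter\ reachability and, via duality, to \unioninter\ safety.

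The plan contains no conceptually hard step: the main effort is careful bookkeeping, in particular making sure that the size of the product construction is measured in the number $K$ of \emph{distinct} target sets (rather than $lm$), that the reduction to plain reachability really yields an algorithm linear in the product size, and that the PSPACE-hardness and memory lower bounds inherited from~\cite{FijalkowH13} are stated precisely enough to carry through the reduction. The remaining risk is ensuring that the quoted runtime bounds in the bullets on Rabin, Streett, Muller and parity are indeed the tightest bounds stated in the corresponding references; if a tighter bound has since appeared, the statement should be adjusted accordingly.
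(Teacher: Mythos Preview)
Your overall plan matches the paper's: cite the literature for every bullet except \unioninter\ reachability/safety, and for the latter build the product with bit-vectors in $\{0,1\}^K$ recording visited target sets, reduce to plain reachability, and inherit the lower bounds from generalised reachability~\cite{FijalkowH13}. That part is essentially identical to the paper's argument.

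There are, however, two genuine gaps.

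\textbf{Missing $\mathsf{PSPACE}$ upper bound.} Your product construction only yields an algorithm in time $O(2^K\cdot(|V|+|E|))$, i.e.\ exponential in the input. That establishes membership in $\mathsf{EXPTIME}$, not in $\mathsf{PSPACE}$; so you have proved $\mathsf{PSPACE}$-hardness and an exponential-time algorithm, but not $\mathsf{PSPACE}$-completeness. The paper closes this gap with a separate argument: in the product reachability game, if \playerOne\ wins then he wins within $K\cdot|V|$ steps (each step can flip at most one new bit), so an alternating Turing machine can simulate the game for $K\cdot|V|$ rounds and check the winning condition in polynomial time; $\mathsf{APTIME}=\mathsf{PSPACE}$ then gives the claimed upper bound. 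You need some argument of this kind.

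\textbf{The safety-via-duality step is wrong as stated.} The complement of $\bigcup_{i}\bigcap_{j}\Safe(U_{i,j})$ is $\bigcap_{i}\bigcup_{j}\Reach(U_{i,j}^c)$, an \emph{intersection of unions} of reachability objectives, not a \unioninter\ reachability objective; putting it into DNF can blow up the number of terms exponentially, so you cannot simply hand it to the \unioninter\ reachability case. The paper avoids this by saying the proof is ``similar'': one performs the same product construction directly for safety (the bits record which forbidden sets $U_{i,j}^c$ have been hit, and \playerOne's objective becomes a plain safety objective on the product). Either redo the construction for safety, or argue the $\mathsf{PSPACE}$ bound for the dual objective directly via the alternating-machine argument above; but the one-line duality you wrote does not work.
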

\begin{proof}
All the statements follow from the literature except for the case of \unioninter\ reachability and \unioninter\ safety games for which we provide a proof. We only consider the reachability case, since the proof is similar for the safety case. First, as solving \unioninter\ reachability games is harder than solving generalized reachability games (when there is no union), we immediately obtain the lower bounds for the complexity and the memory requirements. Indeed, solving generalized reachability games is $\mathsf{PSPACE}$-complete, and exponential memory strategies are necessary for both players~\cite{FijalkowH13}. 

Let us now prove the upper bounds by following the same approach as proposed in~\cite{FijalkowH13} to solve generalized reachability games. Let $(G,\Obj)$ be a \unioninter\ reachability game where $\Obj = \cup_{i=1}^l \cap_{j=1}^m \Reach(U_{i,j})$. We define the function $f' \colon \{ U_{i,j} \mid i \in \{1,\ldots,l\}, j \in \{1,\ldots,m\}\} \rightarrow \{1,\ldots,K\}$ that enumerates all the distinct sets $U_{i,j}$. From $f'$, we construct the function $f \colon \{1,\ldots,l\} \times \{1,\ldots,m\} \rightarrow \{1,\ldots,N\}$ such that $f(i,j) = k$ if $f'(U_{i,j}) = k$. If $f(i,j) = k$, we abusively write $U_{i,j} = U_k$. 

We construct from $G = (V_1,V_2,E)$ a new game structure $G'$ $=$ $(V'_1,V'_2,E')$ in a way to remember which sets $U_k$ have been visited so far, for $k \in \{1,\ldots,K\}$. Formally, $V_i' = V_i \times \{0,1\}^{K}$ for $i \in \{1,2\}$, and $((v,b_1,\ldots,b_K),(v',b'_1,\ldots,b'_K)) \in E'$ if and only if $(v,v') \in E$ and for all $k$, $b'_{k} = 1$ if $b_{k} =1$ or $v' \in U_k$, and $0$ otherwise. With the initial vertex $v_0$ in $G$, we associate the initial vertex $(v_0,b^0_1,\ldots,b^0_K)$ in $G'$ where $b^0_{k} = 1$ if $v_0 \in U_{k}$ and $0$ otherwise. We then have that $\playerOne$ is winning in the original \unioninter\ reachability game from $v_0$ if and only if $\playerOne$ is winning in $G'$ from $(v_0,b^0_1,\ldots,b^0_K)$ for the objective $\Reach(U)$  where $U = \{(v,b_1,\ldots,b_K) \mid \exists i \in \{1,\ldots,l\}, \forall j \in \{1,\ldots,m\}, b_{f(i,j)} = 1\}$. 

Note that solving this reachability game $(G',\Reach(U))$ can be done in time linear in the size of the game with memoryless winning strategies for both players~by~\cite{2001automata}. Coming back to the initial \unioninter\ reachability game, this leads to an algorithm working in $O(2^K \cdot (|V|+|E|))$ time, and to exponential memory winning strategies for both players.

Now, as done for generalized reachability games~\cite{FijalkowH13}, one can notice that if $\playerOne$ is winning for  $\Reach(U)$, then he has a strategy to do so within $K \cdot |V|$ steps. Moreover, given a path of this size, one can check in polynomial time if there exists some $i$ such that the path visits all $U_{f(i,j)}$ for $j \in \{1,\ldots,m\}$. Thus, we can use an alternating Turing machine that simulates the game for up to $K \cdot |V|$ steps and checks whether $\playerOne$ is winning. As the alternating Turing machine works in polynomial time and $\mathsf{APTIME} = \mathsf{PSPACE}$, this yields the $\mathsf{PSPACE}$ algorithm.
\qed
\end{proof}

\begin{table}
\normalsize
\begin{center}
\begin{tabular}{|c|c|c|c|}
\hline
Objectives & ~Complexity class~  & ~~$\playerOne$ memory~~ &~~ $\playerTwo$ memory ~~\\
\hline
\hline
Reachability, safety & \multirow{4}{*}{$\mathsf{P}$-complete}   &  \multicolumn{2}{c|}{\multirow{2}{*}{memoryless}}  \\
\cline{0-0}
B\"uchi, co-B\"uchi  &   & \multicolumn{2}{c|}{} \\
\cline{0-0} \cline{3-4}
Explicit Muller & & \multicolumn{2}{c|}{exponential} \\
\cline{0-0} \cline{3-4}
Generalized B\"uchi &  & linear & memoryless \\
\hline
Generalized reachability & \multirow{2}{*}{$\mathsf{PSPACE}$-complete} &  \multicolumn{2}{c|}{\multirow{2}{*}{exponential}} \\
\cline{0-0}
\unioninter\ reachability, \unioninter\ safe &  & \multicolumn{2}{c|}{}  \\
\hline
Parity & $\mathsf{NP} \cap \mathsf{coNP}$ & \multicolumn{2}{c|}{memoryless} \\
\hline
Rabin & $\mathsf{NP}$-complete & memoryless & exponential  \\
\hline
Streett & \multirow{2}{*}{$\mathsf{coNP}$-complete} & \multirow{2}{*}{exponential} & \multirow{2}{*}{memoryless} \\
\cline{0-0}
\unioninter\ B\"uchi&  &  &  \\
\hline
Muller& \multirow{2}{*}{$\mathsf{PSPACE}$-complete} & \multicolumn{2}{c|}{\multirow{2}{*}{exponential}}  \\
\cline{0-0}
~Boolean B\"uchi & & \multicolumn{2}{c|}{} \\
\hline
\end{tabular}
\end{center}
\caption{Overview of results on games with a single $\omega$-regular objective. The last two columns indicate the tight memory requirements of the winning strategies.}
\label{table:classicalobjectives}
\end{table}

In the sequel, we need some classical properties on $\omega$-regular objectives that we summarize in the following proposition.

\begin{proposition}\label{rem:objectives} 
%Let $G = (V_1,V_2,E)$ be a game structure. The following assertions hold.
\begin{enumerate}
\item \label{item:safety} A safety (resp. co-B\"uchi, Streett) objective is the complement of a reachability (resp. B\"uchi, Rabin) objective. 
%$\Safe(U) = \overline{\Reach(U^c)}$, $\CoBuchi(U) = \overline{\Buchi(U^c)}$, and $\Rabin((E_i,F_i)_{i=1}^k) = \overline{\Streett((E_i,F_i)_{i=1}^k)}$.
\item \label{item:parityRabin} A parity objective is both a Rabin and a Streett objective.
\item \label{item:Streett} Rabin and Streett objectives with one pair are parity objectives with $3$ colors. Thus, a Rabin (resp. Streett) objective is the union (resp. intersection) of parity objectives with $3$ colors.
%\item \label{item:interParity} The intersection of $m$ parity objectives, where the $i^{th}$ parity objective has $d_i$ colors, is a Streett objective with $\sum_{i=1}^m d_i$ pairs.
%\item \label{item:interStreett} The intersection of $m$ Streett objectives, where the $i^{th}$ Streett objective has $k_i$ pairs, is a Streett objective with $\sum_{i=1}^m k_i$ pairs.
\item \label{item:ExplMuller} The intersection of $m$ (resp. union of $l$) explicit Muller objectives $\EMuller(\mathcal{F}_i)$ is an explicit Muller objective $\EMuller(\mathcal{F})$ where $|\mathcal{F}| \leq \min_{i \in \{1,\ldots,m\}}\{|\mathcal{F}_i|\}$ (resp. $|\mathcal{F}| \leq \sum_{i=1}^l |\mathcal{F}_i|$).
%\item \label{item:Rabin} A Rabin objective with $k$ pairs is the union of $k$ Streett objectives with $2$ pairs.
\item \label{item:parity} A parity objective with $d$ colors (resp. Streett objective with $k$ pairs, Rabin objective with $k$ pairs, Muller objective with $d$ colors and a family $\mathcal{F}$) is a Boolean B\"uchi objective of size at most $\frac{d^2}{2}$ (resp. $2 \cdot k$, $d \cdot |\mathcal{F}|$) and with $d$ (resp. $2 \cdot k$, $d$) variables. 
\end{enumerate}
\end{proposition}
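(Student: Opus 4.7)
The plan is to handle each of the five items by direct unfolding of the definitions; the proofs are almost entirely bookkeeping, with no genuinely hard step. I sketch the witnesses below.

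For item 1, each equality is a one-line De Morgan computation: $\Safe(U) = \overline{\Reach(U^c)}$ because ``never visiting $U^c$'' is the negation of ``visiting $U^c$ at least once'', and identically $\CoBuchi(U) = \overline{\Buchi(U^c)}$ after replacing $\Occ$ by $\Occinf$. For Streett versus Rabin, the Streett clause $(\Occinf(\pi)\cap E_i \neq \emptyset) \lor (\Occinf(\pi)\cap F_i = \emptyset)$ is exactly the negation, pair by pair, of Rabin's $(\Occinf(\pi)\cap E_i = \emptyset) \land (\Occinf(\pi)\cap F_i \neq \emptyset)$, and the outer universal versus existential quantifiers dualise correctly. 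Item 2 is the standard ``latest appearance record''--free encoding: given a coloring $p$ with colors in $\{0,\ldots,d\}$, take as Rabin pairs $(E_c,F_c)$ with $F_c = p^{-1}(c)$ and $E_c = p^{-1}(\{0,\ldots,c-1\})$ ranging over even $c$, and symmetrically as Streett pairs the same $(E_c,F_c)$ ranging over odd $c$; either correctly captures ``min color seen infinitely often is even''.

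For item 3, I exhibit an explicit $3$-coloring for one Rabin pair $(E,F)$: set $p(v)=1$ for $v\in E$, $p(v)=2$ for $v\in F\setminus E$, $p(v)=3$ otherwise. The minimum infinitely recurring color is even iff $E$ is not visited infinitely often and $F\setminus E$ is (equivalently, by the set identity $F=(F\setminus E)\cup(F\cap E)$, iff $\Occinf\cap E=\emptyset$ and $\Occinf\cap F\neq\emptyset$). A Rabin objective with $k$ pairs is the union of the $k$ one-pair Rabin objectives, and by item 1 the Streett statement follows by complementation/De Morgan. Item 4 is immediate from the fact that $\Occinf(\pi)$ is a single set: $\pi \in \bigcap_i \EMuller(\mathcal{F}_i)$ iff $\Occinf(\pi)\in \bigcap_i \mathcal{F}_i$, and analogously for unions, with the claimed cardinality bounds following from $|\bigcap_i \mathcal{F}_i|\le \min_i|\mathcal{F}_i|$ and $|\bigcup_i \mathcal{F}_i|\le \sum_i|\mathcal{F}_i|$.

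The only item that requires some care is item 5, which I handle by writing each objective as an explicit Boolean combination of $\Buchi(U_j)$ atoms and then counting $\vee$ and $\wedge$ occurrences. For parity with colors $\{0,\ldots,d\}$ and targets $U_c=p^{-1}(c)$, use $\phi_{\Par} \equiv \bigvee_{c \text{ even}} \bigl(x_c \wedge \bigwedge_{c'<c} \neg x_{c'}\bigr)$, giving $d$ variables and at most $(d/2) + \sum_{c \text{ even}} c \le d^2/2$ binary connectives. For Rabin and Streett with $k$ pairs, the formulas $\bigvee_{i=1}^{k}(\neg x_{E_i}\wedge x_{F_i})$ and $\bigwedge_{i=1}^{k}(x_{E_i}\vee\neg x_{F_i})$ have $2k$ variables and size at most $2k$. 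For Muller with coloring $p$ and family $\mathcal{F}$, use $\bigvee_{F_i\in\mathcal{F}}\bigl(\bigwedge_{c\in F_i} x_c \wedge \bigwedge_{c\notin F_i}\neg x_c\bigr)$, which has $d$ variables (one Büchi atom per color) and at most $d\cdot|\mathcal{F}|$ connectives. The mild subtlety is that the paper's size measure $|\phi|$ counts only $\vee$ and $\wedge$ (not negations), which is what makes these bounds tight; once that convention is applied, each count is a routine verification.
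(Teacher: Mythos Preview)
Your proposal is correct and follows essentially the same approach as the paper: both proceed by direct unfolding of the definitions, with items~1--3 reducing to elementary dualities and explicit encodings (the paper defers items~2--3 to~\cite{ChatterjeeHP07} whereas you spell out the constructions), item~4 via $\bigcap_i\mathcal{F}_i$ and $\bigcup_i\mathcal{F}_i$, and item~5 by writing the explicit Boolean formulas and counting connectives. Your treatment is in fact slightly more detailed than the paper's, but there is no methodological difference.
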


\begin{proof}
First, Item $\ref{item:safety}$ %,\ref{item:interStreett})$ 
immediately follows from the definitions and Items~\ref{item:parityRabin} and~\ref{item:Streett} %, \ref{item:interParity}) 
are detailed in \cite{ChatterjeeHP07}.
%Item~\ref{item:Rabin} immediately follows from the fact that a Rabin objective with one pair is a Streett objective with two pairs since $\Rabin((E_1,F_1)) = \Streett( (\emptyset,E_1),(F_1,\emptyset) )$.%. Indeed, we have $\Rabin(E_1,F_1) = \Streett( (\emptyset,E_1),(F_1,\emptyset) )$ and thus, a Rabin objective with $k$ pairs is the union of $k$ Streett objectives with $2$ pairs.

Let us consider Item~\ref{item:ExplMuller}. For the intersection we have $\cap_{i=1}^m \EMuller(\mathcal{F}_i) = \EMuller(\mathcal{F})$ where $\mathcal{F} = \cap_{i=1}^m \mathcal{F}_i$, and thus $|\mathcal{F}| \leq \min_{i \in \{1,\ldots,m\}}\{|\mathcal{F}_i|\}$. For the union we have $\cup_{i=1}^l \EMuller(\mathcal{F}_i) = \EMuller(\mathcal{F})$ with $\mathcal{F} = \cup_{i=1}^l \mathcal{F}_i$ with $|\mathcal{F}| \leq \sum_{i=1}^l |\mathcal{F}_i|$.

Let us prove the last item by beginning with Muller objectives. It suffices to note that a play belongs to $\Muller(p,\mathcal{F})$ if and only if there exists an element $F$ of $\mathcal{F}$ such that all colors of $F$ are seen infinitely often along the play while no other color is seen infinitely often. This is obviously a Boolean B\"uchi objective $\BoolBuchi(\phi,U_1,\ldots,U_m)$, where each $U_i$ corresponds to a color, that is, $U_i$ is the set of vertices labeled by this color. Note that, in this case, the size of the related formula $\phi$ is at most $d \cdot |\mathcal{F}|$. The arguments are similar for parity, Streett and Rabin objectives (for instance, a play belongs to $\Par(p)$ if and only there exists an even color seen infinitely often along the play and no lower color seen infinitely often).     
\qed
\end{proof}
\section{Fixed parameter complexity of ordered $\omega$-regular games} \label{sec:FPT} 

In this section, we study the fixed parameter tractability of the \tproblem.

\paragraph{\bf Parameterized complexity.} 

A \emph{parameterized language} $L$ is a subset of $\Sigma^* \times \N$, where $\Sigma$ is a finite alphabet, the second component being the parameter of the language. It is called \textit{fixed parameter tractable} (FPT) if there is an algorithm that determines whether $(x,t) \in L$ in time $f(t) \cdot |x|^c$ time, where $c$ is a constant independent of the parameter $t$ and $f$ is a computable function depending on $t$ only. We also say that $L$ belongs to (the class) $\FPT$. Intuitively, a language is FPT if there is an algorithm running in polynomial time w.r.t the input size times some computable function on the parameter. 
In this framework, we do not rely on classical polynomial reductions but rather use so called $\FPT$-reductions. An \emph{$\FPT$-reduction} between two parameterized languages $L \subseteq \Sigma^* \times \N$ and $L' \subseteq \Sigma'^* \times \N$ is a function $R : L \to L'$ such that 
\begin{itemize}
\item $(x,t) \in L$ if and only if $(x',t') = R(x,t) \in L'$,
\item $R$ is computable by an algorithm that takes $f(t) \cdot |x|^c$ time where $c$ is a constant, and
\item $t' \leq g(t)$ for some computable function $g$.
\end{itemize}
Moreover, if $L'$ is in $\FPT$, then $L$ is also in $\FPT$. 
We refer the interested reader to \cite{DowneyF99} for more details on parameterized complexity.

Our main result states that the threshold problem is in $\FPT$ for all the ordered games of this article. Parameterized complexities are given in Table~\ref{table:FPT}.

\begin{theorem}\label{thm:lexiFPT}
The \tproblem\ is in $\FPT$ for ordered reachability, safety, B\"uchi, co-B\"uchi, explicit Muller, Rabin, Streett, parity, and Muller games.
\end{theorem}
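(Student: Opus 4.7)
The plan is to combine the reduction given by Theorem~\ref{thm:reduction} with a case analysis on the objective type, invoking either a direct algorithm for the resulting single-objective game or the $\FPT$ algorithm for Boolean B\"uchi games stated in Proposition~\ref{prop:FPT}. First, I would apply Theorem~\ref{thm:reduction} to replace the ordered game $(G,\Obj_1,\dots,\Obj_n,\precsim)$ and threshold $\mu$ by a single-objective game $(G,\Obj)$, where $\Obj = \bigcup_{i=1}^{l}\bigcap_{j \in S_i}\Obj_j$ is a union of intersections taken from $\Obj_1,\dots,\Obj_n$. The number $l$ of disjuncts and the sizes of the $S_i$ are bounded by a computable function of $n$, so they are controlled by the parameter of the problem.

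Next, I would handle each objective type separately. For reachability and safety, the resulting $\Obj$ is a $\unioninter$-reachability (resp. $\unioninter$-safety) objective over a set of target sets of parameter-bounded size, so the $O(2^{K}(|V|+|E|))$ algorithm of Theorem~\ref{thm:gameresults} is $\FPT$. For B\"uchi and co-B\"uchi, the objective $\Obj$ is a Boolean combination of B\"uchi objectives of parameter-bounded size, so Proposition~\ref{prop:FPT} applies. For explicit Muller, Item~\ref{item:ExplMuller} of Proposition~\ref{rem:objectives} shows that unions and intersections of explicit Muller objectives remain explicit Muller objectives, with family size bounded polynomially (in fact linearly in the total input family sizes), and the $\mathsf{P}$ algorithm of Theorem~\ref{thm:gameresults} then yields an $\FPT$ (actually polynomial) bound.

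For parity, Rabin, Streett and Muller, the union of intersections is no longer in the same class, but Item~\ref{item:parity} of Proposition~\ref{rem:objectives} converts each $\Obj_i$ into a Boolean B\"uchi objective whose size and number of variables are bounded by a function of the parameter. Closing the resulting Boolean formula under the finitely many conjunctions and disjunctions induced by the reduction keeps the size parameter-bounded as well, so Proposition~\ref{prop:FPT} again provides an $\FPT$ algorithm. Finally, I would aggregate the resulting running-time bounds into the rows of Table~\ref{table:FPT}, being careful that the polynomial dependence on $|V|$ and $|E|$ does not interact with the parameter.

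The step I expect to be the main obstacle is controlling, for parity/Rabin/Streett/Muller, the blow-up introduced by the two successive translations (first the reduction to a union of intersections, then the translation to Boolean B\"uchi via Proposition~\ref{rem:objectives}); what must be checked is that the size $|\phi|$ of the final Boolean B\"uchi formula and its number of variables stay bounded by a computable function of the parameter alone, so that the $\FPT$ complexity of Proposition~\ref{prop:FPT} carries through. Once this bookkeeping is done, the remaining work is routine invocation of the single-objective algorithms from Theorem~\ref{thm:gameresults} together with Calude et al.'s quasipolynomial parity algorithm~\cite{Calude}, and the table of parameterized complexities follows.
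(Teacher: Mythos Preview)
Your proposal is correct and follows essentially the same architecture as the paper: apply Theorem~\ref{thm:reduction} to obtain a single union-of-intersections objective on the same game structure, then do a case analysis by objective type, invoking Proposition~\ref{prop:FPT} (via the translation of Proposition~\ref{rem:objectives}) for parity, Rabin, Streett and Muller. Your bookkeeping concern about the double translation is exactly what the paper checks, and it goes through because the number of variables is $\sum_i d_i$ (resp.\ $\sum_i 2k_i$), which depends only on the parameters.

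The one place where you diverge from the paper is the B\"uchi and co-B\"uchi cases. You route these through Proposition~\ref{prop:FPT} as Boolean B\"uchi objectives; the paper instead rewrites the union-of-intersections of B\"uchi objectives in conjunctive normal form to obtain a \emph{generalized B\"uchi} objective with at most $\sprim$ target sets (and dually, for co-B\"uchi, observes that the objective is the complement of a generalized B\"uchi objective with at most $\s$ target sets), then solves it directly in $O(\sprim\cdot|V|^2)$ (resp.\ $O(\s\cdot|V|^2)$) time via Theorem~\ref{thm:gameresults}. Your route still establishes $\FPT$ membership, so the theorem is proved either way, but it would not recover the sharper B\"uchi/co-B\"uchi entries of Table~\ref{table:FPT}; those sharper bounds are precisely what later yield genuinely polynomial-time algorithms for preorders with a compact embedding (Theorem~\ref{thm:poly}), which your Boolean-B\"uchi detour would not give.
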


\medskip

\begin{table}
\scriptsize
\begin{center}
\begin{tabular}{|c||c|c|c|}
\hline
~Objectives~ & ~Parameters~ & ~Threshold problem~  \\
\hline
\rule{0cm}{0.25cm} ~Reachability, Safety~ & $n$ & $O(\s \cdot n + 2^n \cdot (|V| + |E|))$ \\
\hline
%\rule{0cm}{0.25cm} ~B\"uchi~ & $n$ & $O(2^{2^n} \cdot |V|^2)$ \\
\rule{0cm}{0.25cm} ~B\"uchi~ & $n$ & $O(\s \cdot n + \sprim \cdot |V|^2)$ \\
\hline
%\rule{0cm}{0.25cm} ~co-B\"uchi~ & $n$ & $O(2^n \cdot |V|^2)$ \\
\rule{0cm}{0.25cm} ~co-B\"uchi~ & $n$ & $O(\s \cdot n + \s \cdot |V|^2)$ \\
\hline
%\rule{0cm}{0.25cm} ~Explicit Muller~ & $n$ & $O(2^{3n} \cdot ((\max_i |{\cal F}_i|)^3 \cdot |V|^2 \cdot  |E|^2))$ \\
\rule{0cm}{0.25cm} ~Explicit Muller~ & $n$ & $O(\s \cdot n + (\s \cdot \max_i |{\cal F}_i|)^3 \cdot |V|^2 \cdot  |E|^2)$ \\
\hline
\rule{0cm}{0.3cm} ~Rabin, Streett~  & $n$, $k_1, \ldots, k_n$ & $O((2^{M_1} \cdot N_1 + M_1^{M_1}) \cdot |V|^5)$ \\
\hline
\rule{0cm}{0.3cm} Parity  & ~$n$, $d_1, \ldots, d_n$~ & $O((2^{M_2} \cdot N_2 + M_2^{M_2}) \cdot |V|^5)$  \\
\hline
\rule{0cm}{0.3cm} Muller  & ~$n$, $d_1, \ldots, d_n$~ & $O((2^{M_3} \cdot N_3 + M_3^{M_3}) \cdot |V|^5)$ \\
\hline
\end{tabular}\end{center}
%\caption{Fixed parameter tractability of ordered games $(G,\Obj_1,\ldots, \Obj_n,\precsim)$: for $i \in \{1,\ldots,n\}$, $k_i$ denotes the number of pairs and $d_i$ denotes the number of colors of~$p_i$. Sizes $\s$ and $\sprim$ are resp. upper bounded by $2^n$ and $2^{2^n}$.
%For $j \in \{1,2,3\}$, $M_j = 2^{m_j}$, where $m_1 = \sum_{i=1}^n 2 \cdot k_i$, and $m_2 = m_3 = \sum_{i=1}^n d_i$; and $N_1 = \s \cdot \sum_{i =1}^n 2 \cdot k_i$, $N_2 = \s \cdot \sum_{i =1}^n \frac{d_i^2}{2}$, $N_3 = \s \cdot \sum_{i =1}^n |{\cal F}_i| \cdot d_i$, where $|\mathcal{F}_i| \leq 2^{d_i}$.}
\caption{Fixed parameter tractability of ordered games $(G,\Obj_1,\ldots, \Obj_n,\precsim)$: for $i \in \{1,\ldots,n\}$, $k_i$/$d_i$ denotes the number of pairs/colors of each Rabin/Streett/Muller objective $\Obj_i$. Sizes $\s$ and $\sprim$ are resp. upper bounded by $2^n$ and $2^{2^n}$.
For $j \in \{1,2,3\}$, $M_j = 2^{m_j}$ where $m_1 = \sum_{i=1}^n 2 \cdot k_i$, $m_2 = m_3 = \sum_{i=1}^n d_i$, and $N_1 = \s \cdot \sum_{i =1}^n 2 \cdot k_i$, $N_2 = \s \cdot \sum_{i =1}^n \frac{d_i^2}{2}$, $N_3 = \s \cdot \sum_{i =1}^n 2^{d_i} \cdot d_i$.}
\label{table:FPT}
\end{table}

The proof of this theorem needs to show that solving the \tproblem\ for an ordered game $(G,\Obj_1,\ldots, \Obj_n,\precsim)$ is equivalent to solving a game $(G,\Obj)$ with a single objective $\Obj$ equal to the union of intersections of objectives taken in $\{\Obj_1,\ldots,\Obj_n\}$. It also needs to show that solving Boolean B\"uchi games is in $\FPT$.

\paragraph{\bf Monotonic preorders embedded in the subset preorder.}
 
We here present a \textit{key tool} of this paper: solving the \tproblem\ for an ordered game $(G,\Obj_1,\ldots, \Obj_n,\precsim)$ is equivalent to solving a game $(G,\Obj)$ with a single objective $\Obj$ equal to the union of intersections of objectives taken in $\{\Obj_1,\ldots,\Obj_n\}$. The arguments are the following ones. (1) We consider the set $\{0,1\}^n$ of payoffs ordered with $\precsim$ as well as ordered with the subset preorder $\subseteq$ (see the example of Figure~\ref{fig:plongement} where $\precsim$ is the lexicographic preorder). To any payoff $\nu \in \{0,1\}^n$, we associate the set $\delta_{\nu} = \{ i \in \{1, \ldots, n\} \mid \nu_i = 1\}$ containing all indices $i$ such that objective $\Obj_i$ is satisfied. 
(2) Consider the set of payoffs $\nu \succsim \mu$ embedded in the set $\{0,1\}^n$ ordered with $\subseteq$. By monotonicity of $\precsim$, we obtain an upper-closed set $\Set$ that can be represented by the antichain of its \emph{minimal elements} (with respect to $\subseteq$), that we denote by $\M(\mu)$. (3) $\playerOne$ can ensure a payoff $\succsim \mu$ if and only if he has a strategy such that any consistent outcome $\pi$ has a payoff $\nu^* \supseteq \nu$ for some $\nu \in \M(\mu)$, equivalently such that $\pi$ satisfies (at least) the conjunction of the objectives $\Obj_i$ such that $\nu_i = 1$. (4) The objective $\Obj$ of $\playerOne$ is thus a disjunction (over $\nu \in \M(\mu)$) of conjunctions (over $i \in \delta_{\nu}$) of objectives $\Obj_i$. This statement is formulated in the next theorem (see again Figure~\ref{fig:plongement}).

\begin{theorem}\label{thm:reduction}
Let $(G,\Obj_1,\ldots, \Obj_n,\precsim)$ be an ordered game, $\mu \in \{0,1\}^n$ be some threshold, and $v_0$ be an initial vertex. Then, $\playerOne$ can ensure a payoff $\succsim \mu$ from $v_0$ in $(G,\Obj_1,\ldots, \Obj_n,\precsim)$ if and only if $\playerOne$ has a winning strategy from $v_0$ in the game $(G,\Obj)$ with the objective
$\Obj = \cup_{\nu \in \M(\mu)} \cap_{i \in \delta_{\nu}} \Obj_i$.
\qed
\end{theorem}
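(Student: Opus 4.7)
The plan is to reduce the theorem to a single set-theoretic identity on plays, namely that $\{\pi \in \Plays(G) \mid \payoff(\pi) \succsim \mu\}$ equals $\Obj = \cup_{\nu \in \M(\mu)} \cap_{i \in \delta_{\nu}} \Obj_i$ verbatim. Once this equality is established, the equivalence between \playerOne{} ensuring a payoff $\succsim \mu$ and \playerOne{} having a winning strategy in $(G,\Obj)$ is immediate, since both statements unfold to the same condition on consistent outcomes for every counter-strategy of \playerTwo.

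First I would consider the set $S = \{\nu \in \{0,1\}^n \mid \nu \succsim \mu\}$, viewed via the $\delta$-embedding as a subset of $\{0,1\}^n$ ordered by $\subseteq$. The key observation is that \emph{$S$ is upper-closed with respect to $\subseteq$}: if $\nu \in S$ and $\delta_{\nu} \subseteq \delta_{\nu'}$, then monotonicity of $\precsim$ yields $\nu \precsim \nu'$, whence $\mu \precsim \nu \precsim \nu'$ and $\nu' \in S$. Therefore $S$ is fully described by its antichain of $\subseteq$-minimal elements, which by definition is $\M(\mu)$, so that $\nu \in S$ iff there exists $\nu^* \in \M(\mu)$ with $\delta_{\nu^*} \subseteq \delta_{\nu}$. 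This is the only step that genuinely uses monotonicity of $\precsim$ and I expect it to be the crux of the proof, even though it is short.

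Next I would translate this back at the level of plays. For any play $\pi$, the payoff $\payoff(\pi)$ belongs to $S$ iff there exists $\nu^* \in \M(\mu)$ such that $\delta_{\nu^*} \subseteq \delta_{\payoff(\pi)}$, i.e., such that $\payoff_i(\pi) = 1$, hence $\pi \in \Obj_i$, for every $i \in \delta_{\nu^*}$. Unfolding the quantifiers gives exactly $\pi \in \cup_{\nu^* \in \M(\mu)} \cap_{i \in \delta_{\nu^*}} \Obj_i = \Obj$. Thus $\{\pi \mid \payoff(\pi) \succsim \mu\} = \Obj$ as sets of plays.

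Finally, I would conclude by noting that a strategy $\sigma_1$ of \playerOne{} ensures a payoff $\succsim \mu$ from $v_0$ iff $\payoff(\Out(v_0,\sigma_1,\sigma_2)) \succsim \mu$ for every $\sigma_2 \in \Sigma_2$, iff every such outcome lies in $\Obj$, iff $\sigma_1$ is a winning strategy for \playerOne{} in the game $(G,\Obj)$ from $v_0$. No determinacy argument or memory-transfer between the two games is needed, because the two winning conditions are literally identical as subsets of $\Plays(G)$; the entire content of the theorem is the description of $\{\pi \mid \payoff(\pi) \succsim \mu\}$ as a union of intersections via $\M(\mu)$.
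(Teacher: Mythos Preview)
Your proposal is correct and follows essentially the same approach as the paper: the paper sketches exactly this argument in the paragraph preceding the theorem (steps (1)--(4)), using monotonicity to show that $\{\nu \mid \nu \succsim \mu\}$ is $\subseteq$-upper-closed, representing it by the antichain $\M(\mu)$ of its minimal elements, and unfolding this into the union-of-intersections objective. Your observation that the two winning conditions are \emph{literally} the same subset of $\Plays(G)$ is a clean way to phrase the conclusion and makes explicit that no determinacy or strategy-transfer argument is needed.
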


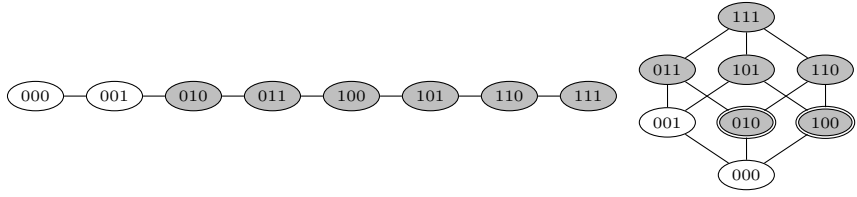
\begin{figure}[t]
   \centering
    \begin{tikzpicture}[scale=3.5]
    \everymath{\scriptstyle}
    
    \draw (0,0) node [ellipse, inner sep =2pt, draw] (A) {$000$};
    \draw (0.3,0) node [ellipse, inner sep =2pt, draw] (B) {$001$};
    \draw (0.6,0) node [ellipse, fill=gray!50,inner sep =2pt, draw] (C) {$010$};
    \draw (0.9,0) node [ellipse, fill=gray!50,inner sep =2pt, draw] (D) {$011$};
    \draw (1.2,0) node [ellipse, fill=gray!50,inner sep =2pt, draw] (E) {$100$};
    \draw (1.5,0) node [ellipse, fill=gray!50,inner sep =2pt, draw] (F) {$101$};
    \draw (1.8,0) node [ellipse, fill=gray!50,inner sep =2pt, draw] (G) {$110$};
    \draw (2.1,0) node [ellipse, fill=gray!50,inner sep =2pt, draw] (H) {$111$};
    \draw (A) -- (B) -- (C) -- (D) -- (E) -- (F) -- (G) -- (H);

    \draw (2.7,-0.3) node [ellipse, inner sep =2pt, draw] (AA) {$000$};
    \draw (2.7,-0.1) node [ellipse, double, fill=gray!50, inner sep =2pt, draw] (CC) {$010$};
    \draw (2.4,-0.1) node [ellipse, inner sep =2pt, draw] (BB) {$001$};
    \draw (3,-0.1) node [ellipse, double, fill=gray!50, inner sep =2pt, draw] (DD) {$100$};
    \draw (AA) -- (CC);
    \draw (AA) -- (BB);
    \draw (AA) -- (DD);
    \draw (2.4,0.1) node [ellipse, fill=gray!50, inner sep =2pt, draw] (EE) {$011$};
    \draw (2.7,0.1) node [ellipse, fill=gray!50, inner sep =2pt, draw] (FF) {$101$};
    \draw (3,0.1) node [ellipse, fill=gray!50, inner sep =2pt, draw] (GG) {$110$};
    \draw (CC) -- (EE);
    \draw (CC) -- (GG);
    \draw (BB) -- (EE);
    \draw (BB) -- (FF);
    \draw (DD) -- (FF);
    \draw (DD) -- (GG);
    
    \draw (2.7,0.3) node [ellipse, fill=gray!50, inner sep =2pt, draw] (HH) {$111$};
    \draw (FF) -- (HH);
    \draw (GG) -- (HH);
    \draw (EE) -- (HH);
      
    \end{tikzpicture} 
    \caption{Gray nodes represent the set of payoffs $\nu \succsim \mu = 010$ for the lexicographic preorder and its embedding for the subset preorder. The elements of $\M(\mu) = \{010, 100 \}$ are doubly circled nodes.}
   \label{fig:plongement}
\end{figure}

Note that we obtain the following corollary as a direct consequence of Theorem~\ref{thm:reduction} and Martin's theorem~\cite{Martin75}.

\begin{corollary}\label{cor:deter}
Let $(G,\Obj_1,\ldots,\Obj_n)$ be an ordered game. 
If $\Obj_1, \ldots, \Obj_n$ are Borel sets, then $\playerOne$ has a strategy to ensure a payoff $\succsim \mu$ from $v_0$ if and only if it is not the case that $\playerTwo$ has a strategy to avoid a payoff $\succsim \mu$ from~$v_0$.
\qed\end{corollary}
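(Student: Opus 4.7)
The plan is to combine Theorem~\ref{thm:reduction} with Martin's determinacy theorem~\cite{Martin75} in three short steps.

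First, I would reformulate both sides of the claimed equivalence as winning conditions in a single-objective game on $G$. By Theorem~\ref{thm:reduction}, $\playerOne$ can ensure a payoff $\succsim \mu$ from $v_0$ if and only if he wins the game $(G,\Obj)$ with $\Obj = \cup_{\nu \in \M(\mu)} \cap_{i \in \delta_{\nu}} \Obj_i$. In fact, the monotonicity argument preceding Theorem~\ref{thm:reduction} establishes the set-level identity $\Obj = \{\pi \in \Plays(G) \mid \payoff(\pi) \succsim \mu\}$, so its complement $\overline{\Obj}$ is exactly $\{\pi \mid \payoff(\pi) \not\succsim \mu\}$, which is $\playerTwo$'s objective in Problem~\ref{prob:threshold}. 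Hence $\playerTwo$ can avoid a payoff $\succsim \mu$ from $v_0$ if and only if $\playerTwo$ wins the game $(G,\overline{\Obj})$ from $v_0$.

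Second, I would check that $\Obj$ is Borel. Since $\M(\mu)$ is a finite subset of $\{0,1\}^n$ and every $\delta_{\nu}$ is a finite subset of $\{1,\ldots,n\}$, the set $\Obj$ is a finite union of finite intersections of the Borel sets $\Obj_1, \ldots, \Obj_n$, and is therefore itself Borel.

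Third, I would apply Martin's theorem to $(G,\Obj)$: the game is determined, so from $v_0$ either $\playerOne$ has a winning strategy for $\Obj$ or $\playerTwo$ has a winning strategy for $\overline{\Obj}$, and these two possibilities are mutually exclusive. Composing this dichotomy with the two reformulations from the first step yields the claimed equivalence. The argument is essentially bookkeeping and I do not anticipate any real obstacle; the only delicate point is that the set-level identity $\Obj = \{\pi \mid \payoff(\pi) \succsim \mu\}$ underlying Theorem~\ref{thm:reduction} must hold as sets (and not merely as winning conditions), so that complementation commutes with the reduction.
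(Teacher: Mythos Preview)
Your proposal is correct and matches the paper's approach exactly: the paper states the corollary as a direct consequence of Theorem~\ref{thm:reduction} and Martin's theorem, and your three steps spell out precisely that argument. The only remark is that the set-level identity you flag as delicate is already built into Problem~\ref{prob:threshold}, where $\Obj$ is defined as $\{\pi \mid \payoff(\pi) \succsim \mu\}$, so there is no genuine obstacle.
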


\medskip\noindent{\bf Parameterized complexity of Boolean B\"uchi games.~} 
In order to show that solving the \tproblem\ for ordered games is in $\FPT$, we need to recall some known results of parameterized complexity for games with a single objective and to prove that solving Boolean B\"uchi games belongs to $\FPT$. 

It is proved in~\cite{FijalkowH13} that generalized reachability games belong to $\FPT$. Parity, Rabin, Streett, and Muller games are shown to be $\FPT$-interreducible in~\cite{FPT}. Very recently, Calude and al. provided a quasipolynomial time algorithm for parity games and showed that parity games are in $\FPT$~\cite{Calude}. It follows that Rabin, Streett, and Muller games also belong to $\FPT$. All these results are summarized in the next theorem with the related complexities.

\begin{theorem}\label{thm:FPT}
Solving generalized reachability, parity, Rabin, Streett, and Muller games is in $\FPT$. Generalized reachability (resp. parity, Muller) games are solvable with an algorithm running in $O(2^m \cdot (|V| + |E|))$ (resp. $O(|V|^5) + g(d)$, $O((d^d \cdot |V|)^5)$) time, where parameter $m$ is the number of reachability objectives, parameter $d$ is the number of colors, and $g$ is some computable function.
\end{theorem}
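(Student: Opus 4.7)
The plan is to assemble the theorem from three existing lines of work, treating each family of objectives in turn, since the statement is essentially a packaging of known results in $\FPT$ form.

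First, for generalized reachability games with $m$ target sets $U_1,\ldots,U_m$, I would recall the Fijalkow--Horn construction from~\cite{FijalkowH13}. One forms a product game whose vertices are pairs $(v,b) \in V \times \{0,1\}^m$, the bit vector $b$ recording which of the $m$ target sets have been visited along the history leading to $v$. Ownership is inherited from $G$, and the objective becomes a single reachability condition: reach any vertex $(v,\mathbf{1})$ whose bit vector is all ones. The augmented game has $O(2^m \cdot |V|)$ vertices and $O(2^m \cdot |E|)$ edges and is solved by a standard attractor computation in linear time, yielding the stated bound $O(2^m \cdot (|V| + |E|))$, which is of $\FPT$ shape $f(m)\cdot\text{poly}(|V|,|E|)$ with $f(m)=2^m$.

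Second, for parity games I would invoke the quasipolynomial algorithm of Calude et al.~\cite{Calude} running in time $|V|^{O(\log d)}$, and then repackage its complexity into the required shape $O(|V|^5) + g(d)$. The standard trick is a case split on whether $d \leq |V|$: if $d > |V|$, the term $|V|^{\log d}$ is dominated by a pure function $g(d) = d^{O(\log d)}$; if $d \leq |V|$, one invokes the refinement via succinct progress measures that builds on~\cite{Calude} and yields a decoupled bound of the shape $f(d) \cdot |V|^{O(1)}$, which can be absorbed into $O(|V|^5)$ after handling low-exponent cases directly. In either regime the problem lies in $\FPT$ parameterized by $d$.

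Third, for Rabin, Streett, and Muller games I would invoke the $\FPT$-interreducibility established in~\cite{FPT}: each of these objectives reduces to a parity game via a latest-appearance-record (LAR) style product construction whose state-space blow-up is a function of the relevant parameter alone (roughly $k!$ for Rabin and Streett with $k$ pairs, and $d!$ for Muller with $d$ colors) while producing a number of parity colors linear in that parameter. Composing this reduction with the parity FPT algorithm of the previous paragraph, and bounding $d!$ by $d^d$, produces the stated Muller complexity $O((d^d \cdot |V|)^5)$; analogous bookkeeping handles Rabin and Streett.

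The main obstacle I expect is the repackaging step for parity games: the raw quasipolynomial exponent $\log d$ in $|V|^{\log d}$ is not literally of $\FPT$ form, so one must either rely on the refined succinct progress measures analysis giving a genuinely decoupled $f(d) \cdot |V|^{O(1)}$ bound, or argue the split $d \leq |V|$ vs.\ $d > |V|$ carefully. Once parity is placed in $\FPT$ with a polynomial $|V|$-dependence of constant degree, the remaining cases follow routinely from the product constructions cited above.
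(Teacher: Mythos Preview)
Your proposal is correct and matches the paper's treatment: the paper does not give a proof of this theorem at all but simply packages cited results, pointing to \cite{FijalkowH13} for generalized reachability, to \cite{Calude} for parity, and to the $\FPT$-interreducibility of \cite{FPT} for Rabin, Streett, and Muller. One small remark: the obstacle you flag about ``repackaging'' the quasipolynomial parity bound into $\FPT$ form is not actually an issue you need to resolve yourself, since \cite{Calude} already establishes fixed-parameter tractability directly (not merely quasipolynomial time), so you can cite that conclusion rather than re-deriving it via a case split.
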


\begin{proposition}\label{prop:FPT}
Solving Boolean B\"uchi games $(G,\Obj)$ is in $\FPT$, with an algorithm in $O(2^{M} \cdot |\phi| + (M^M \cdot |V|)^5)$ time with $M = 2^m$ such that $m$ is the number of variables of $\phi$ in the Boolean B\"uchi objective $\Obj$.
\end{proposition}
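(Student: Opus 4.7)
The plan is to reduce the Boolean Büchi game $(G, \BoolBuchi(\phi, U_1, \ldots, U_m))$ to an equivalent Muller game whose number of colors depends only on the parameter $m$, and then invoke the FPT algorithm for Muller games provided by Theorem~\ref{thm:FPT}.

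First I would color each vertex $v \in V$ by its ``membership pattern''
$c(v) = (\mathbf{1}_{v \in U_1}, \ldots, \mathbf{1}_{v \in U_m}) \in \{0,1\}^m$.
This uses a palette of at most $M = 2^m$ colors. The key observation is that for any play $\pi$, if $F \subseteq \{0,1\}^m$ denotes the set of colors appearing in $\Occinf(\pi)$, then $\Occinf(\pi) \cap U_i \neq \emptyset$ holds if and only if there exists some $c \in F$ with $c_i = 1$. Hence the satisfaction of $(\phi, U_1, \ldots, U_m)$ by $\pi$ depends only on $F$, via the truth assignment $a(F) \in \{0,1\}^m$ defined by $a(F)_i = \bigvee_{c \in F} c_i$.

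Next I would precompute the Muller family
$$\mathcal{F} = \{ F \subseteq \{0,1\}^m \mid \phi(a(F)) = 1 \},$$
by enumerating all $2^M$ subsets of the color palette, computing $a(F)$, and evaluating $\phi$ on each assignment. This construction runs in $O(2^M \cdot |\phi|)$ time (the cost of the $a(F)$ aggregation is absorbed by $|\phi|$ up to standard assumptions on the encoding). By the observation above, a play $\pi$ satisfies $\BoolBuchi(\phi, U_1, \ldots, U_m)$ if and only if $\pi \in \Muller(c, \mathcal{F})$, so solving the Boolean Büchi game is equivalent to solving the Muller game $(G, \Muller(c, \mathcal{F}))$ on the same underlying structure.

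Finally, I apply the FPT algorithm for Muller games from Theorem~\ref{thm:FPT}, which solves a Muller game with $d$ colors in $O((d^d \cdot |V|)^5)$ time. Taking $d = M$, this yields a total running time of $O(2^M \cdot |\phi| + (M^M \cdot |V|)^5)$, which is of the form $f(m) \cdot \mathrm{poly}(|G|, |\phi|)$ and hence establishes membership in $\FPT$ with parameter $m$. The only non-trivial ingredient beyond the reduction is the Muller FPT result of Theorem~\ref{thm:FPT}; the main conceptual step in the plan is the equivalence between the Boolean Büchi condition and the Muller condition with the coloring by membership patterns, which follows directly from the fact that the truth value of $\phi$ on a play depends only on the union of the membership patterns seen infinitely often.
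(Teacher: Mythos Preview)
Your proposal is correct and follows essentially the same approach as the paper: both color vertices by their membership pattern in $\{0,1\}^m$ (the paper phrases this as a subset of $\{1,\ldots,m\}$), define the Muller family as those color sets whose componentwise OR satisfies $\phi$, and then invoke the FPT Muller algorithm with $d=M=2^m$. The only cosmetic difference is that the paper restricts $\mathcal{F}$ to subsets of the actually occurring colors $p(V)$ rather than all of $\{0,1\}^m$, which is immaterial for the correctness of the Muller condition.
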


\begin{proof}
Let us show the existence of an $\FPT$-reduction from Boolean B\"uchi games to Muller games. For this purpose, consider a Boolean B\"uchi game $(G,\Obj)$ with the objective $\Obj = \BoolBuchi(\phi,U_1,\ldots,U_m)$, where $\phi$ is a Boolean formula over variables $x_1,\ldots,x_m$, and $m$ is seen as a parameter. We build an adequate Muller game $(G,\Muller(p,\mathcal{F}))$ on the same game structure and parameterized by the number of colors. The coloring function $p$ and the family $\mathcal F$ are constructed as follows.

To any vertex $v \in V$, we associate a color $p(v) = \mu$ which is a subset of $\{1,\ldots,m\}$ in the following way: $i \in \mu$ if and only if  $v \in U_i$.\footnote{Our definition of color requires $\mu$ to be an integer. It suffices to associate with $v$ a vector $\mu^v \in \{0,1\}^m$ such that $\mu^v_i = 1$ if $v \in U_i$ and $0$ otherwise, and to define the coloring function $p \colon V \rightarrow  \{0,\ldots, 2^m-1\}$ that associates with each vertex $v$ the color $p(v)$ such that its binary encoding is equal to $\mu^v$.} Intuitively, we keep track for all $i$, whether a vertex belongs to $U_i$ or not. The total number $M$ of colors is thus equal to $2^m$. One can notice that $(*)$ a play $\pi$ visits a vertex $v \in U_i$ if and only if $\pi$ visits a color $\mu$ that contains $i$.  

To any subset $F$ of $p(V)$, we associate the truth assignment $\chi(F) \in \{0,1\}^m$ of variables $x_1, \ldots, x_m$ such that for all $i$, $\chi(F)_i = 1$ if there exists $\mu \in F$ such that $i \in \mu$, and $0$ otherwise. The idea (by $(*)$) is that the set $F$ of colors visited infinitely often by a play $\pi$ corresponds to the set $\Occinf(\pi)$ of vertices visited infinitely often such that  $\chi(F)_i = 1$ if and only if $\Occinf(\pi) \cap U_i \ne \emptyset$. We then define $\mathcal{F} = \{ F \subseteq p(V) \mid \chi(F) \models \phi \}$, that is, $\mathcal{F}$ corresponds to the set of all truth assignments satisfying $\phi$.

In this way we have the desired $\FPT$-reduction: first, parameter $M = 2^m$ only depends on parameter $m$. Second, we have that $\playerOne$ is winning in $(G,\BoolBuchi(\phi,U_1,\ldots,U_m))$ from an initial vertex $v_0$ if and only if he is winning in $(G,\Muller(p,\mathcal{F}))$ from $v_0$. Indeed, a play $\pi$ satisfies $(\phi,U_1,\ldots,U_m)$ if and only if the truth assignment (~$x_i = 1$ if and only $\Occinf(\pi) \cap U_i \neq \emptyset$, and $x_i = 0$ otherwise~) satisfies $\phi$. This is equivalent to have that $F = p(\Occinf(\pi))$ belongs to $\mathcal F$ (by definition of $\chi(F)$), that is, $\pi$ belongs to $\Muller(p,{\mathcal F})$. Third, the construction of the Muller game is in $O(2^{2^m} \cdot |\phi|)$ time since it requires $O(|V| + |E|)$ time for the game structure,  $O(m \cdot |V|)$ time for the coloring function $p$, and $O(2^{2^m} \cdot |\phi|)$ time for the family~$\mathcal F$. 

From this $\FPT$-reduction and as solving Muller games is in $O((d^d \cdot |V|)^5)$) time where $d$ is the number of colors~\cite{Calude}, we have an algorithm solving the Boolean B\"uchi game in $O(2^{M} \cdot |\phi| + (M^M \cdot |V|)^5)$ time, where $M = 2^m$.
\qed\end{proof}

\medskip\noindent{\bf Proof of Theorem~\ref{thm:lexiFPT}.~} By Theorem~\ref{thm:reduction}, we know that solving the \tproblem\ for an ordered game $(G,\Obj_1,\ldots, \Obj_n,\precsim)$ is equivalent to solving a game $(G,\Obj)$ with a single objective $\Obj = \cup_{\nu \in \M(\mu)} \cap_{i \in \delta_{\nu}} \Obj_i$. Thanks to this equivalence, we provide a proof of Theorem~\ref{thm:lexiFPT} with the parameterized complexities given in Table~\ref{table:FPT}. This proof uses two sizes depending on the number $n$ of objectives:
\begin{itemize}
\item the \emph{size $\s$} of $\M(\mu)$. It is upper bounded by $2^n$ (an antichain of maximum size in the subset preorder over $\{0,1\}^n$ is of exponential size $\binom{n}{\lfloor \frac{n}{2} \rfloor}$).
\item the \emph{size $\sprim$} defined as follows. In case of B\"uchi objectives $\Obj_i$, we need to rewrite the objective $\cup_{\nu \in \M(\mu)} \cap_{i \in \delta_{\nu}} \Obj_i$ in conjunctive normal form $\cap_{k} \cup_{l} \Obj'_{k,l}$ with $\Obj'_{k,l} \in \{\Obj_1,\ldots,\Obj_n\}$. We denote by $\sprim$ the size of this disjunction. It is bounded by $2^{2^n}$. 
\end{itemize}

In Section~\ref{sec:homo} we will show that, for several objectives, we can go beyond the fixed parameter tractability of Theorem~\ref{thm:lexiFPT} by providing polynomial time algorithms when the sizes $\s$ and $\sprim$ are polynomial in $n$.

\begin{proof}[of Theorem~\ref{thm:lexiFPT}] 
By Theorem~\ref{thm:reduction}, solving the \tproblem\ for an ordered game $(G,\Obj_1,\ldots, \Obj_n,\precsim)$ is equivalent to solving a classical game $(G,\Obj)$ with $\Obj = \cup_{\nu \in \M(\mu)} \cap_{i \in \delta_{\nu}} \Obj_i$. We have $|\M(\mu)| = \s$ and $|\delta_{\nu}| \leq n$ $\forall \nu \in \M(\mu)$. Recall that $\s \leq 2^n$ and $\sprim \leq 2^{2^n}$.

We first show that the \tproblem\ for ordered reachability, safety, B\"uchi, co-B\"uchi, and explicit Muller games is in $\FPT$ with parameter $n$. The reduction provided in Theorem~\ref{thm:reduction} is an $\FPT$-reduction as the number of disjunctions/conjunctions in $\Obj$ only depends on $n$. Moreover the construction of the game $(G,\Obj)$ is in $O(|V| + |E| + \s \cdot n)$ time. In the following items we describe a second $\FPT$-reduction to add to the first one. The sum of the complexities of both $\FPT$-reductions leads to the complexities given in Table~\ref{table:FPT}, rows 2-5.

\begin{itemize}
\item If each $\Obj_i$ is a reachability (resp. safety) objective, then $(G,\Obj)$ is a \unioninter\ reachability (resp. safety) game that can be reduced to a reachability (resp. safety) game over a game structure of size $2^n \cdot |V|$ by Theorem~\ref{thm:gameresults}. The latter is solved in $O(2^n \cdot (|V|+ |E|))$ time. 

\item If $\Obj$ is a union of intersections of B\"uchi objectives, then it can be rewritten as the intersection of unions of B\"uchi objectives which is a generalized B\"uchi objective with at most $\sprim$ target sets. The latter game is solved in $O(\sprim \cdot |V|^2)$ time by Theorem~\ref{thm:gameresults}. The union of intersections of co-B\"uchi objectives is the complementary of a generalized B\"uchi objective with at most $\s$ target sets, leading to an algorithm in $O(\s \cdot |V|^2)$ time.

\item If each $\Obj_i$ is an explicit Muller objective $\EMuller(\mathcal{F}_{i})$ then $\Obj$ is also an explicit Muller objective. Indeed the intersection (resp. union) of explicit Muller objectives is an explicit Muller objective such that $\cap_i \EMuller(\mathcal{F}_i) = \EMuller(\mathcal{F})$ with $\mathcal{F} = \cap_i \mathcal{F}_i$ (resp. $\cup_i \EMuller(\mathcal{F}_i) = \EMuller(\mathcal{F})$ with $\mathcal{F} = \cup_i \mathcal{F}_i$). Therefore $\Obj$ can be here rewritten as $\EMuller(\mathcal{F})$ for some set $\mathcal{F}$ such that $|\mathcal{F}| \leq \sum_{\nu \in \M(\mu)}\min_{j \in \delta_{\nu}} |\mathcal{F}_j |$. The latter game is solved in $O(|\mathcal{F}| \cdot (|V| \cdot |E| + |\mathcal{F}|)^2) = O((\s \cdot \max_i |{\cal F}_i|)^3 \cdot |V|^2 \cdot  |E|^2)$ time by Theorem~\ref{thm:gameresults}. 
\end{itemize}

We now show that the \tproblem\ for ordered parity, Rabin, Streett, and Muller games is in $\FPT$ thanks to Proposition~\ref{prop:FPT}. 

\begin{itemize}
\item  Let us show that the \tproblem\ for ordered parity games is in $\FPT$ with parameters $n, d_1, \ldots, d_n$. If each $\Obj_i$ is a parity objective with $d_i$ colors, then each $\Obj_i$ is a Boolean B\"uchi objective of size at most $\frac{d_i^2}{2}$ and using $d_i$ variables. Indeed, as a play is winning for $\Obj_i$ if and only there exists an even priority seen infinitely often along the play and no lower priority seen infinitely often. Therefore, $\Obj$ is a Boolean B\"uchi objective $\Obj'$ of size $|\phi| \leq \s \cdot \sum_{i=1}^n \frac{d_i^2}{2}$, and with $m = \sum_{i=1}^n d_i$ variables as $\cup_{\nu \in \M(\mu)} \{\Obj_i \mid i \in \delta_{\nu} \} \subseteq \{\Obj_1,\ldots, \Obj_n\}$. We thus have an $\FPT$-reduction to the game $(G,\Obj')$ depending on the parameters $n, d_1, \ldots, d_n$ and with an algorithm in $O(|V| + |E| + |\phi|)$ time.
By Proposition~\ref{prop:FPT}, solving the game $(G,\Obj')$ is in $\FPT$ with an algorithm in $O(2^{M} \cdot |\phi| + (M^M \cdot |V|)^5)$ time with $M = 2^m$. Thus the \tproblem\ is in $\FPT$ with parameters $n, d_1, \ldots, d_n$, with an overall algorithm in $O((2^{M} \cdot N + M^{M}) \cdot |V|^5)$ time where $N = 2^n \cdot \sum_{i=1}^n \frac{d_i^2}{2}$.

\item The arguments are similar for ordered Rabin, Streett, and Muller games. The only differences are the bound on size $|\phi|$, the number $m$ of variables and the parameters (see Table~\ref{table:FPT}). For Rabin and Streett games, the parameters are $n, k_1,\ldots, k_n$ and we have $|\phi| \leq \s \cdot \sum_{i =1}^n 2 \cdot k_i = N$ and $m = \sum_{i=1}^n 2 \cdot k_i$. For Muller games, the parameters are $n, d_1, \ldots, d_n$ and we have $|\phi| \leq \s \cdot \sum_{i =1}^n |\mathcal{F}_i| \cdot d_i \leq \s \cdot \sum_{i =1}^n 2^{d_i} \cdot d_i  = N$, where $\forall i$ $m = \sum_{i=1}^n d_i$.

\end{itemize}
\qed\end{proof}
\section{Ordered games with a compact embedding}\label{sec:homo}
 
In the previous section, we have shown that solving the threshold problem for ordered $\omega$-regular games is in $\FPT$. This result depends on sizes $\s$ and $\sprim$ which vary with the number $n$ of objectives.  
In this section, we study ordered games such that these sizes are polynomial in $n$. 

\paragraph{\bf Preorders with a compact embedding in the subset preorder.~}
An ordered game $(G,\Obj_1,\ldots, \Obj_n,\precsim)$ has a \emph{compact embedding} (in the subset preorder) if both sizes $\s$ and $\sprim$ are polynomial in $n$. While the threshold problem is in $\FPT$ for ordered B\"uchi, co-B\"uchi, and explicit Muller games, it becomes polynomial as soon as their preorder has a compact embedding. This is a direct consequence of Table~\ref{table:FPT}, rows 2-4.

\begin{theorem} \label{thm:poly}
The threshold problem is solved in polynomial time for ordered B\"uchi, co-B\"uchi, and explicit Muller games with a compact embedding.
\end{theorem}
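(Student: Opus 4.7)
The plan is to invoke the reduction of Theorem~\ref{thm:reduction} together with the detailed complexity analysis already carried out in the proof of Theorem~\ref{thm:lexiFPT}, and to observe that each of the three rows in question of Table~\ref{table:FPT} gives a running time that is polynomial in $|V|$, $|E|$ multiplied by a factor depending only on $\s$, $\sprim$ and the per-objective size parameters. Since a compact embedding forces both $\s$ and $\sprim$ to be polynomial in $n$ (and the individual objective sizes $|\mathcal{F}_i|$ are part of the input), these factors become polynomial in the input size, yielding the claim. So the proof is essentially a bookkeeping step on top of Theorem~\ref{thm:lexiFPT}.

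Concretely, I would split the argument by objective type. For the \textbf{B\"uchi} case, after reducing via Theorem~\ref{thm:reduction} to the single objective $\Obj = \bigcup_{\nu \in \M(\mu)} \bigcap_{i \in \delta_\nu} \Buchi(U_i)$, I rewrite $\Obj$ in conjunctive normal form $\bigcap_k \bigcup_l \Buchi(U'_{k,l})$; using the identity $\bigcup_l \Buchi(U'_{k,l}) = \Buchi(\bigcup_l U'_{k,l})$ this is a generalized B\"uchi objective with at most $\sprim$ conjuncts, which is solved in $O(\sprim \cdot |V|^2)$ by Theorem~\ref{thm:gameresults}. For \textbf{co-B\"uchi}, I would instead complement: $\Obj$ is the complement of a generalized B\"uchi objective with at most $\s$ conjuncts (so that $\playerTwo$'s game is a generalized B\"uchi game), again solved in $O(\s \cdot |V|^2)$. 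For \textbf{explicit Muller}, I use Proposition~\ref{rem:objectives}(\ref{item:ExplMuller}): finite unions and intersections of explicit Muller objectives are again explicit Muller objectives, and the resulting family $\mathcal F$ satisfies $|\mathcal F| \le \sum_{\nu \in \M(\mu)} \min_{i \in \delta_\nu} |\mathcal F_i| \le \s \cdot \max_i |\mathcal F_i|$, so that Theorem~\ref{thm:gameresults} gives a running time of $O\bigl((\s \cdot \max_i |\mathcal F_i|)^3 \cdot |V|^2 \cdot |E|^2\bigr)$.

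In every case, once $\s$ and $\sprim$ are polynomial in $n$, the total running time (including the $O(\s \cdot n)$ cost of constructing $\M(\mu)$ and the $O(|V|+|E|)$ construction of the reduced game) is polynomial in the size of the input. The only genuinely non-trivial part is making sure the embedding of $\precsim$ into the subset preorder and the enumeration of $\M(\mu)$ can be computed in time polynomial in $\s$ and $n$; for the concrete preorders we care about (lexicographic, maximize, subset) this is immediate, and it is encapsulated in the definition of compact embedding (which we may take to include effective computability of $\M(\mu)$). The main obstacle, then, is not algorithmic but conceptual: verifying that the definition of ``compact embedding'' does indeed guarantee polynomial-size and polynomial-time enumerability of $\M(\mu)$ so that the bounds from Table~\ref{table:FPT} collapse to polynomials; everything else is a direct citation of results already established.
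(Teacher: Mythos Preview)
Your proposal is correct and follows exactly the paper's approach: the paper states that Theorem~\ref{thm:poly} ``is a direct consequence of Table~\ref{table:FPT}, rows 2-4,'' and your case analysis simply unpacks those three rows of the proof of Theorem~\ref{thm:lexiFPT}. Your caveat about the effective computability of $\M(\mu)$ is a fair observation (the paper's definition of compact embedding only bounds the \emph{sizes} $\s$ and $\sprim$), but the paper glosses over this point as well, implicitly treating the construction of $\M(\mu)$ as part of the $O(\s\cdot n)$ term.
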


One can easily prove that ordered games using the subset or the maximize preorder have a compact embedding. We will later prove that this also holds for the lexicographic preorder. Nevertheless it is not the case for the counting preorder. Indeed solving the threshold problem for counting B\"uchi games is {\sf co-NP}-complete~\cite{BouyerBMU12}. 

Recall that solving the threshold problem for ordered B\"uchi games reduces to solving some \unioninter\ B\"uchi game (by Theorem~\ref{thm:reduction}). Whereas solving the latter games is $\mathsf{coNP}$-complete by Theorem~\ref{thm:gameresults}, solving the \tproblem\ for ordered B\"uchi games is only polynomial when they have a compact embedding (see Theorem~\ref{thm:poly}).

There is no hope to extend Theorem~\ref{thm:poly} to the other $\omega$-regular objectives studied in this article, unless $\mathsf{P} = \mathsf{PSPACE}$. Indeed, we have $\mathsf{PSPACE}$-hardness of the threshold problem for the following lexicographic games.

\begin{theorem}\label{thm:homogeneoushyp} (1) Lexicographic games have a compact embedding and 
(2) the threshold problem is $\mathsf{PSPACE}$-hard for lexicographic reachability, safety, Rabin, Streett, parity, and Muller games.
\end{theorem}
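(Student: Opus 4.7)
The plan follows the two-part structure of the statement.

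\emph{Part~(1).} I describe $\M(\mu)$ explicitly under the lexicographic preorder. Let $i_1<\cdots<i_p$ be the $1$-positions of $\mu$, and let $J = \{j\in\{1,\ldots,n\} : \mu_j = 0 \text{ and } \exists k>j,\, \mu_k=1\}$; for each $j\in J$, set $\tau_j := \mu_1\cdots\mu_{j-1}\,1\,0^{n-j}$. I claim that
\[
\M(\mu) \;=\; \{\mu\} \,\cup\, \{\tau_j : j\in J\}.
\]
The verification has three short steps: (i)~each listed element lies in $\{\nu : \nu \succsim \mu\}$, witnessed by the explicit first strict-increase position against $\mu$; (ii)~the listed elements are pairwise subset-incomparable, since each $\tau_j$ carries a $1$ at index $j$ absent from $\mu$ and from every $\tau_{j'}$ with $j'\neq j$; (iii)~any $\nu$ strictly smaller in the subset order than one of the listed elements and still above $\mu$ would exhibit a first-difference position against $\mu$ at which $\nu<\mu$, contradicting $\nu\succsim\mu$. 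This yields $\s = |\M(\mu)| \leq n+1$. For $\sprim$, I use the equivalent characterization ``$\nu\succsim\mu$ iff for every $1$-position $i_q$ of $\mu$, either $\nu_{i_q}=1$ or some $k<i_q$ with $\mu_k=0$ has $\nu_k=1$,'' which directly rewrites the objective as the CNF
\[
\Obj \;=\; \bigcap_{q=1}^{p}\Bigl(\Obj_{i_q}\cup \bigcup_{k<i_q,\,\mu_k=0}\Obj_k\Bigr),
\]
of total size $O(n^2)$, so $\sprim = O(n^2)$. Both sizes are polynomial in $n$, establishing the compact embedding.

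\emph{Part~(2).} I dispatch the six objective types in three groups. For lexicographic reachability, the threshold $\mu=1^n$ gives $\M(\mu)=\{1^n\}$ and $\Obj=\bigcap_i\Reach(U_i)$, a generalized reachability objective, $\mathsf{PSPACE}$-hard by Theorem~\ref{thm:gameresults}. For lexicographic safety, the threshold $\mu=0^{n-1}1$ yields, via the CNF form, $\Obj=\bigcup_i\Safe(U_i)$; this is the complement of the generalized reachability objective $\bigcap_i\Reach(U_i^c)$, and by Corollary~\ref{cor:deter}, the standard $V_1\!\leftrightarrow\! V_2$ swap of the game structure that exchanges the roles of the two players, and closure of $\mathsf{PSPACE}$ under complement, the hardness of generalized reachability transfers to lex safety. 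For lexicographic Muller games, taking $n=1$ already produces a single Muller game, which is $\mathsf{PSPACE}$-hard by Theorem~\ref{thm:gameresults}.

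\emph{The main obstacle—parity, Rabin, and Streett.} Single objectives of these three types lie only in $\mathsf{UP}\cap\mathsf{coUP}$, are $\mathsf{NP}$-complete, and are $\mathsf{coNP}$-complete, respectively, so the lex structure must be essential and the trivial ``$n=1$'' reduction cannot work. My plan is to reduce from the lex reachability (or lex safety) game just shown $\mathsf{PSPACE}$-hard, by replacing each reachability objective $\Reach(U_i)$ with an equivalent parity (respectively single-pair Rabin or Streett) objective on a polynomially-enlarged game graph~$G'$, while preserving both the lex threshold and the full payoff vector. The canonical encoding of $\Reach(U_i)$ as parity attaches a fresh self-looping ``latch'' vertex $b_i$ reachable from $U_i$, with coloring $p_i$ making $\Par(p_i)$ read ``$b_i$ is visited infinitely often'' (equivalently, a single-pair Rabin or Streett condition on $\{b_i\}$). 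The step I expect to be the hardest is doing this simultaneously for all $n$ objectives: a naive latching forces one player to commit to a single $b_i$ and collapses the full vector payoff to a scalar. I plan to address this with a parallel-latching gadget where each latch $b_i$ can be independently revisited infinitely often whenever $U_i$ is entered in the original game—for instance, by splicing in a Player-$1$-controlled $(U_i,b_i,\text{return})$-cycle for each $i$—and to check, using the explicit form of $\M(\mu)$ from Part~(1), that the resulting lex parity/Rabin/Streett game has exactly the same payoff-vector semantics as the original lex reachability game.
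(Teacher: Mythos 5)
Your Part~(1) and the reachability, safety and Muller cases of Part~(2) are correct and essentially match the paper: your description of $\M(\mu)$ is exactly Proposition~\ref{prop:size}, and your direct CNF characterization of $\{\nu \mid \nu \succsim \mu\}$ is a clean (and correct) alternative to the paper's route to $\sprim \leq n$, which instead passes through a duality with the complemented objectives and the antichain $\M(\overline{\mu}+1)$. The safety reduction via complementation and a role swap is also the paper's argument in substance.

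The gap is in the parity, Rabin and Streett cases, which you only sketch --- and the sketch cannot be repaired. Your latch gadget replaces each $\Reach(U_i)$ by the condition ``$b_i$ is visited infinitely often'', i.e.\ by $\Buchi(\{b_i\})$. If this produced, on a polynomial-size graph, a lexicographic game with the same payoff-vector semantics, that game would in particular be a lexicographic B\"uchi game, whose threshold problem is solvable in polynomial time by Theorem~\ref{thm:poly} together with your own Part~(1); your reduction would therefore place the $\mathsf{PSPACE}$-hard lexicographic reachability problem in $\mathsf{P}$. The obstruction is intrinsic, not a matter of engineering the gadget: $\Reach(U_i)$ is a property of finite prefixes while parity, Rabin and Streett are tail conditions, so any payoff-preserving translation must record which targets have already been visited, which costs $2^n$ copies of the graph. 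The paper avoids this entirely by reducing from a different $\mathsf{PSPACE}$-complete, \emph{prefix-independent} problem: solving games whose single objective is the union of a Rabin and a Streett objective~\cite{AlurTM03}. Since a Rabin (resp.\ Streett) objective is a union (resp.\ intersection) of $3$-color parity objectives by Proposition~\ref{rem:objectives}, one writes $\Obj = \cup_{i=1}^{n_1}\Par(p_i) \cup \cap_{i=n_1+1}^{n}\Par(p_i)$ and checks that with threshold $\mu = 0^{n_1}1^{n-n_1}$ the condition $\payoff(\pi)\succsim\mu$ says exactly ``some of the first $n_1$ parity objectives holds, or all of the last $n-n_1$ hold''. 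Hardness for lexicographic Rabin and Streett games then follows because a parity objective is a special case of each. You should replace your gadget plan by this reduction (or another one from a prefix-independent $\mathsf{PSPACE}$-hard game).
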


Note that  we obtain the following corollary as a direct consequence of Theorems~\ref{thm:poly} and~\ref{thm:homogeneoushyp}.

\begin{corollary}\label{cor:lexiproblem}
The threshold problem for lexicographic B\"uchi, co-B\"uchi and explicit Muller games is $\mathsf{P}$-complete and is $\mathsf{PSPACE}$-complete for lexicographic safety, reachability, parity, Streett, Rabin and Muller games.
\end{corollary}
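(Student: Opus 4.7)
The plan is to assemble the pieces already stated in the preceding theorems. For the $\mathsf{P}$-complete cases, the upper bound follows by combining Theorem~\ref{thm:homogeneoushyp}(1), which asserts that the lexicographic preorder admits a compact embedding in the subset preorder, with Theorem~\ref{thm:poly}, which provides polynomial-time algorithms for B\"uchi, co-B\"uchi, and explicit Muller games under exactly this assumption. The matching $\mathsf{P}$-hardness is inherited from the single-objective case $n = 1$: any classical B\"uchi, co-B\"uchi, or explicit Muller game is a special instance of our problem (trivial lexicographic preorder on $\{0,1\}$ and threshold $\mu = 1$), and these single-objective problems are $\mathsf{P}$-hard by Theorem~\ref{thm:gameresults}.

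For the $\mathsf{PSPACE}$-complete cases, the lower bound is precisely the content of Theorem~\ref{thm:homogeneoushyp}(2), so only the upper bound needs to be argued. Here I would apply Theorem~\ref{thm:reduction}: solving the threshold problem reduces to deciding whether $\playerOne$ wins the game $(G,\Obj)$ with $\Obj = \bigcup_{\nu \in \M(\mu)} \bigcap_{i \in \delta_\nu} \Obj_i$. By the compact embedding of the lexicographic preorder, $|\M(\mu)| = \s$ is polynomial in $n$, so this Boolean combination has polynomially many conjunctions and disjunctions in the input size. For safety (resp.\ reachability) objectives, $\Obj$ is then a \unioninter\ safety (resp.\ reachability) objective of polynomial size, and Theorem~\ref{thm:gameresults} directly gives a $\mathsf{PSPACE}$ procedure. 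For parity, Rabin, Streett, and Muller objectives, I would invoke Proposition~\ref{rem:objectives}(\ref{item:parity}) to rewrite each $\Obj_i$ as a Boolean B\"uchi objective of polynomial size; a disjunction of conjunctions of Boolean B\"uchi objectives is still a Boolean B\"uchi objective whose size is polynomial in the input, so Theorem~\ref{thm:gameresults} again yields a $\mathsf{PSPACE}$ algorithm.

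The only point that needs a mild sanity check is that the reduction of Theorem~\ref{thm:reduction} can be carried out in polynomial time, which amounts to computing $\M(\mu)$ in polynomial time: for the lexicographic preorder this follows from the compact embedding established in Theorem~\ref{thm:homogeneoushyp}(1), and the subsequent syntactic rewriting into a \unioninter\ or Boolean B\"uchi formula is clearly polynomial. No real obstacle remains; the corollary is essentially bookkeeping on top of Theorems~\ref{thm:poly} and~\ref{thm:homogeneoushyp}, with Theorem~\ref{thm:reduction} and Proposition~\ref{rem:objectives} used to convert the resulting compound objectives into forms whose single-objective complexity is already catalogued in Theorem~\ref{thm:gameresults}.
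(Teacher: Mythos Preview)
Your argument is correct, and for the $\mathsf{P}$-complete cases and the $\mathsf{PSPACE}$-hardness you proceed exactly as the paper does. The one point of divergence is the $\mathsf{PSPACE}$ upper bound: the paper simply invokes the results of Bouyer et~al.~\cite{BouyerBMU12}, who established $\mathsf{PSPACE}$-membership for the threshold problem in a more general concurrent setting, whereas you derive it internally from Theorem~\ref{thm:reduction}, the compact embedding of the lexicographic preorder (so $|\M(\mu)|\le n$ by Proposition~\ref{prop:size}), and the $\mathsf{PSPACE}$ algorithms for \unioninter\ reachability/safety and Boolean B\"uchi games recorded in Theorem~\ref{thm:gameresults}. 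Your route is more self-contained and arguably more informative, since it shows that the machinery developed in the paper already suffices; the paper's route is shorter but relies on an external reference. Both are valid; the only thing you should make explicit is that $\M(\mu)$ is not just polynomial in size but also polynomial-time \emph{computable} for the lexicographic preorder, which is immediate from the explicit description in Proposition~\ref{prop:size} rather than from the abstract notion of compact embedding alone.
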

\begin{proof}
The $\mathsf{P}$-membership for lexicographic B\"uchi, co-B\"uchi, and explicit Muller games games follows from Theorem~\ref{thm:poly} and Part (1) of Theorem~\ref{thm:homogeneoushyp}, while $\mathsf{P}$-hardness follows from Theorem~\ref{thm:gameresults} as with $n =1$ lexicographic B\"uchi, co-B\"uchi and explicit Muller games are respectively (classical) B\"uchi, co-B\"uchi, and explicit Muller games. For the other $\omega$-regular objectives, $\mathsf{PSPACE}$-hardness follows from Part (2) of Theorem~\ref{thm:homogeneoushyp}, while $\mathsf{PSPACE}$-membership follows from the work of Bouyer et al.~\cite{BouyerBMU12}.
\qed \end{proof}

The rest of this section is devoted to the proof of Theorem~\ref{thm:homogeneoushyp}. 

\paragraph{\bf Lexicographic games.~}
We now focus on the lexicographic preorder $\precsim$. 
Let us first provide several useful terminology and comments on this preorder. Recall that the lexicographic preorder is monotonic. It is also total, hence $x \sim y$ if and only if $x = y$, and $x \prec y$ if and only if $\neg (y \precsim x)$.
Given a vector $x \in \{0,1\}^n$, we denote by $\overline{x}$ the \emph{complement} of $x$, i.e. $\overline{x}_i = 1- x_i$, for all $i \in \{1, \ldots, n\}$. We denote by $x - 1$ the \emph{predecessor} of $x \ne 0^n$, that is, the greatest vector which is strictly smaller than $x$. We define the \emph{successor} $x+1$ of $x$ similarly. In the sequel, as the \tproblem\ is trivial for $x = 0^n$, we do not consider this threshold. By abuse of notation, we keep writing $x \in \{0,1\}^n$ without mentioning that $x \ne 0^n$. We denote by $\last(x)$ the last index $i$ of $x$ such that $x_i=1$, i.e. $\last(x) = \max\{i \in \{1,\ldots,n\} \mid x_i = 1 \}$. Note that $\playerOne$ can ensure a payoff $\succsim x \ne 0^n$ if and only if he can ensure a payoff $\succ x-1$, and when $\playerTwo$ can avoid a payoff $\succsim x$, we rather say that $\playerTwo$ can \emph{ensure} a payoff $\prec x$. 

%\begin{example}
%Consider the vector $x = 001100$. Then, $\overline{x} = 110011$ and $x - 1 = 001011$. The set $\delta_{x}$ is equal to $\{3,4\}$ and $\last(x) = 4$.% and the set $M(x)$ is equal to $\{x\} \cup \{100000, 010000, 001110, 001101\}$.
%\end{example}

We now prove that the lexicographic games have a compact embedding (Part (1) of Theorem~\ref{thm:homogeneoushyp}): we first show that $\s$ is polynomial in Proposition~\ref{prop:size}, and we then show that $\sprim$ is also polynomial in Proposition~\ref{prop:hypo2}.

\begin{proposition}\label{prop:size}
Let $x \in \{0,1\}^n$. Then the set $\M(x)$ is equal to $\{x\} \cup \{y^j \in \{0,1\}^n \mid x_j = 0 \mbox{ $\wedge$ } j < \last(x) \}$, where for all $j \in \{1,\ldots,\last(x)-1\}$, we define the vector $y^j \in \{0,1\}^n$ as equal to $x_1 \ldots x_{j-1} 1 0^{n-j}$ ($x$ and $y^j$ share the same (possibly empty) prefix $x_1 \ldots x_{j-1}$). Moreover, $\s = |\M(x)| \leq n$.
\end{proposition}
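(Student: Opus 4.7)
The plan is to establish a double inclusion between $\M(x)$ and the candidate set $C := \{x\} \cup \{y^j \mid j < \last(x),\ x_j = 0\}$, and then read off the cardinality bound directly from the description of $C$. Both directions amount to unfolding the definition of the lexicographic order together with the definition of minimality in the subset order; the one technical point is that for the minimality of each $y^j$ we must genuinely use $j < \last(x)$.

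First I would verify $C \subseteq \M(x)$. For $x$ itself: any strict subset $z \subsetneq x$ first differs from $x$ at some position $i$ with $x_i = 1$ and $z_i = 0$, so $z \prec x$, and $x$ is minimal. For $y^j$ with $j < \last(x)$ and $x_j = 0$: by construction $y^j$ and $x$ share the prefix of length $j-1$, and $y^j_j = 1 > 0 = x_j$, hence $y^j \succ x$. To check minimality, consider removing a $1$-bit of $y^j$. If it is removed at a position $i < j$, the result disagrees with $x$ first at $i$ with a $0$ versus $x_i = 1$, so it is $\prec x$. If the bit at position $j$ itself is removed, the vector becomes $x_1 \ldots x_{j-1} 0^{n-j+1}$, which differs from $x$ first at the smallest index $k > j$ with $x_k = 1$; such a $k$ exists precisely because $j < \last(x)$, and this vector is again $\prec x$. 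Thus no strict subset of $y^j$ lies in $\{y \mid y \succsim x\}$.

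For the other inclusion $\M(x) \subseteq C$, I would take a minimal $z \in \M(x)$ with $z \neq x$. Since the lexicographic preorder is total, $z \succ x$, so there is a smallest index $j$ where they differ, and necessarily $z_j = 1$, $x_j = 0$, with $z_i = x_i$ for $i < j$. A short case analysis shows $j < \last(x)$: the case $j = \last(x)$ is ruled out because $x_{\last(x)} = 1$ while $x_j = 0$; and $j > \last(x)$ would force $z$ to contain every $1$-position of $x$ (all lying in the common prefix of length $j-1$) plus the extra position $j$, i.e. $x \subsetneq z$, contradicting the minimality of $z$ since $x \in \{y \mid y \succsim x\}$. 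Finally, if $z$ carried a $1$ at some position $k > j$, flipping it to $0$ would yield $z' \subsetneq z$ still sharing with $x$ the prefix of length $j-1$ and satisfying $z'_j = 1$, hence $z' \succ x$, again contradicting minimality. Therefore $z = y^j$.

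The cardinality bound then follows immediately: $|\M(x)| = 1 + |\{j \mid 1 \leq j < \last(x),\ x_j = 0\}| \leq 1 + (\last(x) - 1) \leq n$. The only step where I expect any subtlety is the minimality verification for $y^j$ when the removed bit is at position $j$ itself, because that is where the hypothesis $j < \last(x)$ is essential; I would flag this explicitly in the written proof.
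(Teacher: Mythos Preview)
Your proof is correct and follows essentially the same double-inclusion strategy as the paper. The only minor difference is that, when checking minimality of $y^j$, you only test strict subsets obtained by removing a single $1$-bit; this suffices because $\{y \mid y \succsim x\}$ is upper-closed for $\subseteq$ (monotonicity of $\precsim$), so any witnessing strict subset can be enlarged to a maximal one, but you may want to say this explicitly. Conversely, your argument is actually more careful than the paper's in one place: in the inclusion $\M(x) \subseteq C$ you verify that the index $j$ satisfies $j < \last(x)$, which the paper's proof leaves implicit.
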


\begin{example}
Consider the vector $x = 0010100$ such that $\last(x) = 5$. Then, the set $\M(x)$ is equal to $\{x\} \cup \{1000000, 0100000, 0011000\}$.
\end{example}

\begin{proof}[of Proposition~\ref{prop:size}] We recall that $\M(x)$ is the set of minimal elements (with respect to the subset preorder $\subseteq$) of the set of payoffs $y \succsim x$ embedded in the set $\{0,1\}^n$ ordered with $\subseteq$. Let us show both inclusions between $\M(x)$ and $M = \{x\} \cup \{y^j \in \{0,1\}^n \mid x_j = 0 \mbox{ $\wedge$ } j < \last(x) \}$. 

Let $y \in \M(x)$. If $y = x$, then trivially $y \in M$. Otherwise, assume $y \succ x$ and let $j$ be the first index such that $y_j = 1$ and $x_j = 0$. Note that $x_1 \ldots x_{j-1} = y_1 \ldots y_{j-1}$ since $y \succ x$. We associate with $y$ the vector $y^j =  y_1 \ldots y_{j-1} 1 0^{n-j}$. Note that $y^j \succ x$. By minimality of $y$ and by construction of $y^j$, we obtain $y = y^j$ showing that $y \in M$. 

For the second inclusion, as the lexicographic preorder is monotonic, we have $x \in \M(x)$. Now, consider some $y^j \in M$ such that $x_j = 0$ and $j < \last(x)$. Let us show that $y^j$ belongs to $\M(x)$, that is, $y^j \succsim x$ and there is no $y \succsim x$, $y \ne y^j$, such that $y \subset y^j$ (i.e. $\{i \mid y_i =1 \} \subset \{ i \mid y^j_i = 1\}$). First, we clearly have $y^j \succsim x$ since $y^j = x_1 \ldots x_{j-1}10^{n-j}$ and $x_j =0$. Towards a contradiction, assume now that there exists some $y \succsim x$, $y \ne y^j$, such that $y \subset y^j$. Let $i$ be the first index such that $y_i=0$ and $y^j_i =1$. As $y \subset y^j$, we have $i \leq j$. If $i < j$, then $y$ has $x_1 \ldots x_{i-1}0$ as prefix, $y_i^j = x_i = 1$, showing that $y \prec x$ in contradiction with $y \succsim x$. If $i = j$, then $y = x_1 \ldots x_{j-1}0^{n-j+1}$, and again $y \prec x$ since $j < \last(x)$ by construction of $y^j$.
\qed
\end{proof}

In order to show that $\sprim$ is polynomial in Proposition~\ref{prop:hypo2}, we need to proof the following proposition that establishes the link between the duality between a payoff defined with some objectives and the payoff defined with the opposite objectives.

\begin{proposition}\label{thm:dual}
Let $(G,\Obj_1,\ldots, \Obj_n,\precsim)$ be a lexicographic game, $\mu \in \{0,1\}^n$ be a threshold, and $v_0$ be an initial vertex. Then $\playerOne$ can ensure a payoff $\succeq \mu$ in the lexicographic game $(G,\Obj_1,\ldots, \Obj_n,\precsim)$ if and only if $\playerOne$ can ensure a payoff $\lexi \overline{\mu}$ in the lexicographic game $(G,\overline{\Obj_1},\ldots,\overline{\Obj_n},\precsim)$.
\end{proposition}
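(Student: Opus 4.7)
The plan is to reduce the statement to a pointwise duality property of the lexicographic preorder under bitwise complementation, and then transfer this via the observation that the two games share the same game structure (and hence the same strategies and outcomes).

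First I would establish the following key lemma: for every $v,\mu \in \{0,1\}^n$, one has $v \succsim \mu$ (lexicographically) if and only if $\overline{v} \precsim \overline{\mu}$. The proof splits into two cases. If $v = \mu$, then $\overline{v} = \overline{\mu}$ and the equivalence is trivial. If $v \succ \mu$, let $j$ be the first index where they differ; then $v_j = 1$ and $\mu_j = 0$, hence $\overline{v}_j = 0 < 1 = \overline{\mu}_j$, while $\overline{v}_k = \overline{\mu}_k$ for all $k < j$. So $\overline{v} \prec \overline{\mu}$, in particular $\overline{v} \precsim \overline{\mu}$. The converse implication follows by applying the lemma to $\overline{v}$ and $\overline{\mu}$, using that complementation is an involution.

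Second I would observe that the two games $(G,\Obj_1,\dots,\Obj_n,\precsim)$ and $(G,\overline{\Obj_1},\dots,\overline{\Obj_n},\precsim)$ share the same underlying game structure $G$, and therefore the same set of plays and the same strategy spaces $\Sigma_1,\Sigma_2$ for both players. For a fixed play $\pi$, let $\payoff^{\mathrm{o}}(\pi)$ denote its payoff in the original game and $\payoff^{\mathrm{c}}(\pi)$ its payoff in the complemented game. By definition, $\payoff^{\mathrm{c}}_i(\pi) = 1$ iff $\pi \in \overline{\Obj_i}$ iff $\payoff^{\mathrm{o}}_i(\pi) = 0$, so $\payoff^{\mathrm{c}}(\pi) = \overline{\payoff^{\mathrm{o}}(\pi)}$.

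Combining the two facts, for any strategy $\sigma_1 \in \Sigma_1$ the following are equivalent: (i) $\sigma_1$ ensures payoff $\succsim \mu$ in the original game, i.e.\ for every $\sigma_2 \in \Sigma_2$, $\payoff^{\mathrm{o}}(\Out(v_0,\sigma_1,\sigma_2)) \succsim \mu$; (ii) for every $\sigma_2$, $\overline{\payoff^{\mathrm{o}}(\Out(v_0,\sigma_1,\sigma_2))} \precsim \overline{\mu}$ (by the key lemma); (iii) for every $\sigma_2$, $\payoff^{\mathrm{c}}(\Out(v_0,\sigma_1,\sigma_2)) \precsim \overline{\mu}$ (by the second observation); (iv) $\sigma_1$ ensures payoff $\precsim \overline{\mu}$ in the complemented game. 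This gives the claimed equivalence. The only real work is the case analysis in the key lemma; the rest is bookkeeping on strategies and payoff definitions.
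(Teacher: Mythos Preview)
Your proof is correct and follows essentially the same approach as the paper's: both arguments rest on the observation that the payoff in the complemented game equals the bitwise complement of the original payoff, together with the order-reversing property $\nu \succsim \mu \iff \overline{\nu} \precsim \overline{\mu}$ of the lexicographic order under complementation. The only difference is that you spell out the proof of this order-reversing lemma via a case analysis on the first differing index, whereas the paper simply asserts it; your version is slightly more explicit but not genuinely different in structure.
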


\begin{proof}
Recall that $\playerOne$ can ensure a payoff $\succeq \mu$ from $v_0$ in the lexicographic game $(G,\Obj_1,\ldots, \Obj_n,\precsim)$ if and only if he has a winning strategy from $v_0$ for the objective $\{\pi \mid \payoff(\pi) = \nu \succeq \mu\}$. Moreover, for any $i$, $\nu_i = 1$ if and only if $\pi \in \Obj_i$, i.e. $\pi \not\in \overline{\Obj_i}$, and $\nu_i=0$ if and only if $\pi \not\in \Obj_i$, i.e. $\pi \in \overline{\Obj_i}$. 
Thus, $\overline{\nu}_i = 0$ iff $\pi \not\in \overline{\Obj_i}$ and $\overline{\nu}_i =1$ iff $\pi \in \overline{\Obj_i}$. 
Then, as $\nu \succeq \mu$ iff $\overline{\nu} \preceq \overline{\mu}$, we have that $\playerOne$ is winning  from $v_0$ for the objective $\{\pi \mid \payoff(\pi) = \nu \succeq \mu\}$ if and only if $\playerOne$ has a strategy to ensure a payoff $\preceq \overline{\mu}$ from $v_0$ in the lexicographic game $(G,\overline{\Obj_1},\ldots,\overline{\Obj_n},\precsim)$.
\qed
\end{proof}

\begin{proposition}\label{prop:hypo2}
Let $(G,\Obj_1,\ldots, \Obj_n,\precsim)$ be a lexicographic B\"uchi game and $\mu \in \{0,1\}^n$. Then, the objective $\Obj = \cup_{\nu \in \M(\mu)} \cap_{i \in \delta_{\nu}} \Obj_i$ can be rewritten in conjunctive normal form with a conjunction of size $\sprim \leq n$.
\end{proposition}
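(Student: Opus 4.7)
The plan is to exhibit an \emph{explicit} CNF for $\Obj$ by reasoning at the level of payoffs and then bound the number of conjuncts. Recalling from Proposition~\ref{prop:size} that $\M(\mu) = \{\mu\} \cup \{y^j : \mu_j = 0,\, j < \last(\mu)\}$ with $\delta_{y^j} = \{i < j : \mu_i = 1\} \cup \{j\}$, the candidate CNF I would propose is
\[
\Obj \; = \; \bigcap_{i \in \delta_\mu}\Bigl(\Obj_i \, \cup \, \bigcup_{j < i,\,\mu_j = 0} \Obj_j\Bigr).
\]
Since each $\Obj_i$ is a B\"uchi objective $\Buchi(U_i)$ and a union of B\"uchi objectives $\bigcup_\ell \Buchi(U_\ell)$ is itself the B\"uchi objective $\Buchi(\bigcup_\ell U_\ell)$, each inner disjunction is a single B\"uchi objective, and the whole right-hand side is a generalized B\"uchi objective with $|\delta_\mu| \leq n$ target sets, which yields $\sprim \leq n$.

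The technical heart is verifying the proposed identity. Since $\pi \in \Obj$ iff $\payoff(\pi) \succsim \mu$ (Theorem~\ref{thm:reduction}), it suffices to prove the purely combinatorial equivalence
\[
\nu \succsim \mu \;\;\Longleftrightarrow\;\; \forall i \in \delta_\mu :\; \nu_i = 1 \,\vee\, \exists j < i\, (\mu_j = 0 \wedge \nu_j = 1),
\]
for every $\nu \in \{0,1\}^n$. The forward implication splits on whether $\nu = \mu$ (in which case every $\nu_i = \mu_i = 1$ for $i \in \delta_\mu$) or $\nu \succ \mu$: in the latter case let $k$ be the first index with $\nu_k \neq \mu_k$, so $\nu_k = 1$ and $\mu_k = 0$; then any $i \in \delta_\mu$ is either $< k$ (so $\nu_i = \mu_i = 1$) or $> k$ (so $j := k$ witnesses the disjunct, noting $i \neq k$ because $\mu_i = 1 \neq \mu_k$). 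The converse I would prove by contraposition: if $\nu \prec \mu$, the first index $k$ of disagreement satisfies $\nu_k = 0$ and $\mu_k = 1$, so $k \in \delta_\mu$; but by minimality of $k$, $\nu_j = \mu_j$ for every $j < k$, so no $j < k$ with $\mu_j = 0$ can have $\nu_j = 1$, defeating both disjuncts of the conjunct indexed by $i = k$.

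The only step requiring any real care is the converse direction, where one must genuinely use the \emph{minimality} of the first disagreement position to rule out every potential $j$-witness; everything else is a routine translation between the payoff-level statement of $\succsim \mu$ and the objective-set form, and no machinery beyond Proposition~\ref{prop:size} and Theorem~\ref{thm:reduction} is required.
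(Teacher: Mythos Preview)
Your argument is correct: the combinatorial equivalence you state and verify does characterise $\{\nu : \nu \succsim \mu\}$, and the resulting CNF has exactly $|\delta_\mu| \le n$ conjuncts, each a union of B\"uchi objectives and hence itself B\"uchi.

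The paper reaches the \emph{same} CNF by a different, more indirect route. Instead of writing the clauses by hand, it invokes the duality of Proposition~\ref{thm:dual}: a play satisfies $\payoff(\pi)\succsim\mu$ (with respect to the $\Obj_i$) iff it satisfies $\payoff(\pi)\precsim\overline{\mu}$ with respect to the complemented objectives $\overline{\Obj_i}$, i.e.\ iff it does \emph{not} satisfy $\payoff(\pi)\succsim\overline{\mu}+1$ in the complemented game. Applying Theorem~\ref{thm:reduction} to the threshold $\overline{\mu}+1$ and complementing gives $\Obj=\bigcap_{\nu\in\M(\overline{\mu}+1)}\bigcup_{i\in\delta_\nu}\Obj_i$, and Proposition~\ref{prop:size} bounds $|\M(\overline{\mu}+1)|\le n$. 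Unwinding the description of $\M(\overline{\mu}+1)$ from Proposition~\ref{prop:size} shows that its elements are in bijection with $\delta_\mu$ and that the corresponding index sets $\delta_\nu$ are precisely your sets $\{i\}\cup\{j<i:\mu_j=0\}$; so the two CNFs coincide literally, and in particular $|\M(\overline{\mu}+1)|=|\delta_\mu|$. Your approach is more elementary and self-contained (it needs neither Proposition~\ref{thm:dual} nor a second appeal to Theorem~\ref{thm:reduction}); the paper's approach has the advantage of making the B\"uchi/co-B\"uchi symmetry explicit and of reusing the existing machinery rather than introducing a new case analysis.
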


\begin{proof}
By Proposition~\ref{thm:dual} and Item~\ref{item:safety} of Proposition~\ref{rem:objectives}, $\playerOne$ can ensure a payoff $\succsim \mu$ in $(G,\Obj_1,\ldots, \Obj_n,\precsim)$ if and only if $\playerOne$ can ensure a payoff $\lexi \overline{\mu}$ in the lexicographic co-B\"uchi game $(G,\overline{\Obj_1},\ldots,\overline{\Obj_n},\precsim)$. By Thereom~\ref{thm:reduction} and Corollary~\ref{cor:deter}, equivalently, $\playerTwo$ cannot satisfy the objective $\cup_{\nu \in \M(\overline \mu + 1)} \cap_{i \in \delta_{\nu}} \overline{\Obj_i}$. This is equivalent to say that $\playerOne$ can satisfy the complement of the latter objective, that is, the objective $\cap_{\nu \in \M(\overline \mu + 1)} \cup_{i \in \delta_{\nu}} \Obj_i$. We have $|\M(\overline \mu + 1)| \leq n$ by Proposition~\ref{prop:size}.
\qed\end{proof}

We finally prove Part (2) of Theorem~\ref{thm:homogeneoushyp}.

\begin{proof}[of Theorem~\ref{thm:homogeneoushyp}, Part (2)]
Let us study the complexity lower bounds.
\begin{itemize}
\item The $\mathsf{PSPACE}$-hardness of the \tproblem\ for lexicographic reachability (resp. safety) games is obtained thanks to a polynomial reduction from solving generalized reachability games  which is $\mathsf{PSPACE}$-complete by Theorem~\ref{thm:gameresults}. Let $(G,\Obj)$ be a generalized reachability game with $\Obj = \GenReach(U_1,\ldots,U_n)$. Let $(G,\Obj_1,\ldots, \Obj_n,\precsim)$ be the lexicographic reachability (resp. safety) game with $\Obj_i = \Reach(U_i)$ (resp. $\Obj_i = \Safe(U_i^c)$)~$\forall i$.
\begin{itemize}
\item Reachability: We have that $\playerOne$ is winning in $(G,\Obj)$ from $v_0$ if and only if $\playerOne$ can ensure a payoff $\succsim \mu = 1^n$ from $v_0$ in the lexicographic reachability game $(G,\Obj_1,\ldots, \Obj_n,\precsim)$.
\item Safety: We claim that $\playerOne$ is winning in $(G,\Obj)$ from $v_0$ if and only if $\playerOne$ can ensure a payoff $\succsim \mu = 0^{n-1}1$ from $v_0$ in the lexicographic safety game. This follows from the determinacy of generalized reachability games, and from the fact that  $\playerOne$ can ensure a payoff $\succsim \mu$ from $v_0$ in the lexicographic safety game if and only if $\playerTwo$ is losing in the generalized reachability game $(G,\Obj)$ from $v_0$.
\end{itemize}
\item The hardness of the \tproblem\ for lexicographic parity games is obtained thanks to a polynomial reduction from solving games $(G,\Obj)$ the objective $\Obj$ of which is a union of a Rabin objective and a Streett objective, which is known to be $\mathsf{PSPACE}$-complete~\cite{AlurTM03}. Let $\Obj = \Rabin((E_i,F_i)_{i=1}^{n_1}) \cup \Streett((E_i,F_i)_{i=n_1+1}^{n})$. As any Rabin (resp. Streett) objective is the union (resp. intersection) of parity objectives by Item~\ref{item:Streett} of Proposition~\ref{rem:objectives}, we can rewrite $\Obj$ as $\Obj = \cup_{i=1}^{n_1}(\Par(p_i)) \cup (\cap_{i=n_1+1}^n \Par(p_i))$,
where all $p_i$ are coloring functions. Let $(G,\Obj_1,\ldots, \Obj_n,\precsim)$ be the lexicographic parity game where $\Obj_i = \Par(p_i)$ for all $i$. We claim that $\playerOne$ is winning in the game $(G,\Obj)$ from $v_0$ if and only if $\playerOne$ can ensure a payoff $\succsim \mu$ from $v_0$ in the lexicographic parity game $(G,\Obj_1,\ldots, \Obj_n,\precsim)$ where  $\mu = 0^{n_1}1^{n-n_1}$.
Indeed, if a play $\pi$ satisfies $\payoff(\pi) \succsim \mu$ then either $\payoff(\pi) = \mu$ in which case $\pi \in \cap_{i=n_1+1}^n \Par(p_i)$, i.e. $\pi$ satisfies the Streett objective, or $\payoff(\pi) \succ \mu$ in which case there exists $1 \in \{1,\ldots, n_1 \}$ such that $\pi \in \Par(p_i)$, i.e. $\pi$ satisfies the Rabin objective. Conversely, if a play $\pi$ satisfies the Streett or the Rabin objective then $\payoff(\pi) \succsim \mu$ since $\payoff(\pi) \succsim \mu$ (resp. $\succ \mu$) as soon as $\pi$ satisfies the Streett (resp. Rabin) objective. 
\item As parity objectives are a special case of Rabin (Streett) objectives by Item~\ref{item:parityRabin} of Proposition~\ref{rem:objectives}, the lower bound follows (from the previous item) for both lexicographic Rabin and Streett games. 
\item Lexicographic Muller games with $n = 1$ and $\mu = 1$ are a special case of Muller games and solving the latter games is PSPACE-complete by Theorem~\ref{thm:gameresults}.
\end{itemize}

\noindent
This completes the proof. \qed\end{proof}
\section{Values and optimal strategies in lexicographic games}\label{sec:values}

In this section, we first recall the notion of values and optimal strategies. We then show how to compute the values in lexicographic games, and what are the memory requirements for the related optimal strategies. This yields a \textit{full picture} of the study of lexicographic games, see Table~\ref{table:homogeneous}. In this table, the first row indicates the complexity of the \tproblem\ (Corollary~\ref{cor:lexiproblem}) and the remaining rows summarize the results on the values and the optimal strategies (Theorem~\ref{thm:valueAndMemory} herafter).

\begin{table}
\begin{center}
\scalebox{0.80}{
\begin{tabular}{|c||c|c|c|c|c|c|c|c|c|}
\cline{2-10}
\multicolumn{1}{c|}{} & ~Reachability~ & ~Safety~ & ~B\"uchi~ & ~Co-B\"uchi~ & ~Explicit Muller~ & ~Parity~ & ~Rabin~ & ~Streett~ & ~Muller~ \\
\cline{1-10}
 ~Threshold problem~ & \multicolumn{2}{c|}{~$\mathsf{PSPACE}$-complete~} & \multicolumn{3}{c|}{$\mathsf{P}$-complete} & \multicolumn{4}{c|}{$\mathsf{PSPACE}$-complete} \\
\hline
Values & \multicolumn{2}{c|}{exponential and FPT} & \multicolumn{3}{c|}{polynomial} & \multicolumn{4}{c|}{exponential and FPT}\\
\hline
~$\playerOne$ memory~ & \multicolumn{2}{c|}{\multirow{2}{*}{exponential}} & linear & ~memoryless~ & \multicolumn{5}{c|}{\multirow{2}{*}{exponential}} \\
\cline{0-0} \cline{4-5}
~$\playerTwo$ memory~ & \multicolumn{2}{c|}{} & ~memoryless~ & linear & \multicolumn{5}{c|}{} \\
\hline
\end{tabular}}
\end{center}
\caption{Overview of the results on lexicographic games with $\omega$-regular objectives. The second row indicates the complexity time of computing the values. The third and last rows indicate the tight memory requirements of the winning and optimal strategies for both players.}
\label{table:homogeneous}
\end{table}

\paragraph{\bf Values and optimal strategies.~}

In a lexicographic game, one can define the best reward that $\playerOne$ can ensure from a given vertex, that is, the highest threshold $\mu$ for which $\playerOne$ can ensure a payoff $\succsim \mu$. Dually, we can also define the worth reward that $\playerTwo$ can ensure. More precisely, if there exists some $\mu \in \{0,1\}^n$ and two strategies $\sigma_1^* \in \Sigma_1, \sigma_2^* \in \Sigma_2$ such that $\payoff (\Out(v,\sigma_1,\sigma_2^*)) \precsim \mu \precsim \payoff (\Out(v,\sigma_1^*,\sigma_2))$ for all strategies $\sigma_1 \in \Sigma_1, \sigma_2 \in \Sigma_2$, then $\mu$ is called the \emph{value} $\mathsf{Val}(v)$ of $v$ and $\sigma_1^*, \sigma_2^*$ are called \emph{optimal} strategies from $v$. Note that the play $\pi = \Out(v,\sigma_1^*,\sigma_2^*)$ consistent with both optimal strategies has payoff $\payoff(\pi) = \mathsf{Val}(v)$. 
The lexicographic game $(G,\Obj_1,\ldots,\Obj_n,\precsim)$ is called \emph{value-determined} if $\mathsf{Val}(v)$ exists for every $v \in V$.

%Note that in the following definition, the infimum and supremum functions are applied with $\precsim$.
%
%\begin{definition}
%Given a lexicographic game $(G,\Obj_1,\ldots,\Obj_n,\precsim)$, for every vertex $v \in V$, the \emph{upper value} $\valsup$ and the \emph{lower value} $\valinf$ are defined as:
%\[
%\valsup = \inf_{\sigma_2 \in \Sigma_2} \sup_{\sigma_1 \in \Sigma_1} \payoff(\Out(v,\sigma_1,\sigma_2)) \text{ and }
%\valinf = \sup_{\sigma_1 \in \Sigma_1} \inf_{\sigma_2 \in \Sigma_2} \payoff(\Out(v,\sigma_1,\sigma_2)).
%\]
%\end{definition}
%
%The lexicographic game $(G,\Obj_1,\ldots,\Obj_n,\precsim)$ is \emph{determined} if, for every $v \in V$, $\valinf = \valsup$. In this case, we write $\mathsf{Val}(v) = \valsup = \valinf$ and we call $\mathsf{Val}(v)$ the \emph{value} of $v$. Note that the inequality $\valinf \precsim \valsup$ always holds. If $\playerOne$ (resp. $\playerTwo$) can ensure a payoff $\succsim \valinf$ (resp. $\precsim \valsup$) from $v$, his related winning strategy $\sigma_1^*$ (resp. $\sigma_2^*$) is called \emph{optimal from} $v$:
%\[
%\valinf \precsim \payoff (\Out(v,\sigma_1^*,\sigma_2)) \quad\forall \sigma_2 \in \Sigma_2 \text{ and }
%\payoff (\Out(v,\sigma_1,\sigma_2^*)) \precsim \valsup \quad\forall \sigma_1 \in \Sigma_1.
%\]
%Notice that in case of determinacy, the play $\pi = \Out(v,\sigma_1^*,\sigma_2^*)$ consistent with both optimal strategies has payoff $\payoff(\pi) = \mathsf{Val}(v)$.

We have the following theorem for lexicographic games. The rest of the section is devoted to its proof.

\begin{theorem}\label{thm:valueAndMemory}
(1) The value of each vertex in lexicographic B\" uchi, co-B\" uchi, and explicit Muller games can be computed with a polynomial time algorithm, and with an exponential time and an FPT algorithm for lexicographic reachability, safety, parity, Rabin, Streett, and Muller games.

(2) The following assertions hold for both winning strategies of the \tproblem\ and optimal strategies. Linear memory strategies are necessary and sufficient for $\playerOne$ (resp. $\playerTwo$) while memoryless strategies are sufficient for $\playerTwo$ (resp. $\playerOne$) in lexicographic B\"uchi (resp. co-B\"uchi) games. Exponential memory strategies are both necessary and sufficient for both players in lexicographic reachability, safety, explicit Muller, parity, Rabin, Streett, and Muller games.
\end{theorem}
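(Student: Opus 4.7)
The plan is to use the \tproblem\ as an oracle and determine $\mathsf{Val}(v)$ bit by bit. Totality of the lexicographic preorder together with the determinacy provided by Corollary~\ref{cor:deter} ensure that $\mathsf{Val}(v)$ exists and equals the lexicographically greatest $\mu$ such that $\playerOne$ can ensure a payoff $\succsim \mu$ from $v$. Starting from the most significant bit, I would fix $\mu_i$ by invoking the \tproblem\ with threshold $\mu_1\cdots\mu_{i-1}10^{n-i}$ and setting $\mu_i := 1$ if $\playerOne$ wins and $\mu_i := 0$ otherwise. This uses $n$ threshold queries per vertex, hence $n \cdot |V|$ in total; combined with Corollary~\ref{cor:lexiproblem} it yields the claimed polynomial time algorithms for B\"uchi, co-B\"uchi, and explicit Muller games, and an $\FPT$ algorithm for the remaining objectives. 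The exponential time bound for those remaining objectives either follows from $\mathsf{PSPACE}\subseteq\mathsf{EXPTIME}$, or directly from the explicit single-objective reductions appearing in the proof of Theorem~\ref{thm:lexiFPT}.

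\textbf{Part (2): memory upper bounds.} The plan is to push the bounds of Theorem~\ref{thm:gameresults} through the single-objective reduction $\Obj = \bigcup_{\nu \in \M(\mu)} \bigcap_{i \in \delta_\nu} \Obj_i$ given by Theorem~\ref{thm:reduction}. For lexicographic B\"uchi games, Proposition~\ref{prop:hypo2} rewrites $\Obj$ as a CNF with at most $n$ conjuncts; since a disjunction of B\"uchi objectives is the B\"uchi objective on the union of the targets, $\Obj$ collapses to a generalised B\"uchi with at most $n$ components, for which linear memory for $\playerOne$ and memoryless strategies for $\playerTwo$ are known to suffice. The co-B\"uchi case is dual via Proposition~\ref{thm:dual}: an intersection of co-B\"uchi objectives is the co-B\"uchi objective on the intersection of the targets, so $\playerTwo$'s complementary objective is a generalised B\"uchi and the memory roles swap. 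For reachability and safety, $\Obj$ is a \unioninter\ reachability (resp.\ safety) objective with polynomially many targets, directly handled by Theorem~\ref{thm:gameresults}. For explicit Muller games, $\Obj$ stays explicit Muller by Item~\ref{item:ExplMuller} of Proposition~\ref{rem:objectives}. For parity, Rabin, Streett and Muller games, $\Obj$ embeds into a Boolean B\"uchi objective via Item~\ref{item:parity} of Proposition~\ref{rem:objectives}, and the exponential memory upper bound for Boolean B\"uchi games closes the case.

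\textbf{Part (2): memory lower bounds and optimal strategies.} For the lower bounds I would embed the tight single-objective witnesses of Theorem~\ref{thm:gameresults} into lex instances. Taking $n=1$ immediately transfers the exponential memory requirements for explicit Muller and Muller, and for $\playerTwo$ in Rabin and $\playerOne$ in Streett; the symmetric players for Rabin, Streett and parity are forced by casting Streett (resp.\ Rabin) games as lex parity games with threshold $1^n$ (resp.\ $0^{n-1}1$) via Item~\ref{item:Streett} of Proposition~\ref{rem:objectives}, invoking the classical exponential memory bounds. For lex reachability and safety, $\playerOne$'s exponential lower bound comes from generalised reachability phrased as a lex game with threshold $1^n$, and $\playerTwo$'s bound from duality; for B\"uchi and co-B\"uchi, $\playerOne$'s (resp.\ $\playerTwo$'s) linear memory lower bound is similarly obtained from generalised B\"uchi at threshold $1^n$, the symmetric bounds following from Proposition~\ref{thm:dual}. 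Finally, an optimal strategy of $\playerOne$ is a winning strategy for the \tproblem\ with threshold $\mathsf{Val}(v)$, and an optimal strategy of $\playerTwo$ is a winning strategy for the dual \tproblem\ with threshold $\mathsf{Val}(v)+1$ (trivial when $\mathsf{Val}(v)=1^n$), so all memory bounds extend verbatim from winning to optimal strategies. The main obstacle will be the uniform case analysis: aligning, objective by objective, the CNF/DNF rewritings of $\Obj$, the number of targets they introduce, and the corresponding lower-bound witness so that the imported memory bounds remain tight and polynomial (resp.\ exponential) in $n$ as claimed.
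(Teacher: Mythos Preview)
Your proposal is correct and follows essentially the same architecture as the paper: Part~(1) is proved via the bit-by-bit binary search of Proposition~\ref{prop:value}, the upper bounds in Part~(2) are read off from the single-objective reductions in the proof of Theorem~\ref{thm:lexiFPT} together with Theorem~\ref{thm:gameresults}, and the transfer from winning to optimal strategies is exactly Remark~\ref{rem:valeur}.

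The one place where you genuinely diverge is the exponential memory lower bound for lexicographic parity (and, by inclusion, lexicographic Rabin and Streett). The paper reuses the $\mathsf{PSPACE}$-hardness reduction from games whose objective is a union of a Rabin and a Streett objective~\cite{AlurTM03}, and observes that this game is at least as hard as a pure Rabin game and a pure Streett game, which forces exponential memory for both players. You instead run two separate, simpler reductions: a Streett game becomes a lexicographic parity game with threshold $1^n$ (yielding the $\playerOne$ lower bound), and a Rabin game becomes a lexicographic parity game with threshold $0^{n-1}1$ (yielding the $\playerTwo$ lower bound), both via Item~\ref{item:Streett} of Proposition~\ref{rem:objectives}. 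Your route is more elementary since it avoids the Alur--La Torre--Madhusudan result and relies only on the classical Rabin/Streett memory bounds; the paper's route has the virtue of reusing verbatim the reduction already built for Theorem~\ref{thm:homogeneoushyp}. Similarly, for lexicographic safety you appeal to duality (Proposition~\ref{thm:dual}) to import the bounds from lexicographic reachability, whereas the paper gives a direct encoding of generalized reachability into lexicographic safety with threshold $0^{n-1}1$; both are fine, but note that the single reduction from generalized reachability to lexicographic reachability with threshold $1^n$ already gives exponential memory for \emph{both} players (since both need it in generalized reachability), so your invocation of duality there is not strictly necessary.
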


\paragraph{\bf Proof of Part (1) of Theorem~\ref{thm:valueAndMemory}.~} When the objectives are Borel sets, the following proposition states the value-determinacy of lexicographic games. It also states that an algorithm for the threshold problem leads to an algorithm for computing the values with a complexity multiplied by $n$. 
Hence the first part of Theorem~\ref{thm:valueAndMemory} immediately follows from Corollary~\ref{cor:lexiproblem} and Proposition~\ref{prop:value}.

\begin{proposition}\label{prop:value}
Let $(G,\Obj_1,\ldots,\Obj_n)$ be a lexicographic game. 
If $\Obj_1, \ldots, \Obj_n$ are Borel sets, then the lexicographic game is value-determined. 
Moreover, if the \tproblem\  can be solved with an algorithm of complexity $\C$, then for all $v \in V$, the value $\mathsf{Val}(v)$ can be computed with an algorithm of complexity $n \cdot \C$.
\end{proposition}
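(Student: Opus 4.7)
The plan is to prove the two parts independently, the first as a short determinacy argument, the second as a bit-by-bit binary-search style reduction.

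For Part (1), I will use that the lexicographic preorder on $\{0,1\}^n$ is total and the set is finite. Fix $v \in V$ and let $T(v) = \{\mu \in \{0,1\}^n \mid \playerOne \text{ can ensure a payoff } \succsim \mu \text{ from } v\}$. Since $0^n \in T(v)$ and $T(v)$ is a finite totally ordered set under $\precsim$, it admits a maximum $\mu^*$; let $\sigma_1^*$ witness $\mu^* \in T(v)$. If $\mu^* = 1^n$, pick any $\sigma_2^*$ and the inequalities of the value definition hold. Otherwise $\mu^* + 1$ is well-defined and $\mu^* + 1 \notin T(v)$ by maximality. Since each $\Obj_i$ is Borel, the objective $\{\pi \mid \payoff(\pi) \succsim \mu^*+1\}$ is also Borel as a finite Boolean combination of Borel sets, so Corollary~\ref{cor:deter} applies: $\playerTwo$ has a strategy $\sigma_2^*$ such that every consistent outcome has payoff $\not\succsim \mu^*+1$, i.e.\ $\prec \mu^*+1$, which by totality is equivalent to $\precsim \mu^*$. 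Together, $\sigma_1^*, \sigma_2^*$ witness $\mathsf{Val}(v) = \mu^*$.

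For Part (2), I compute $\mathsf{Val}(v) = b_1 b_2 \ldots b_n$ bit by bit, from left to right, invoking the threshold algorithm $n$ times. The invariant I maintain after $k$ queries is that the first $k$ bits of $\mathsf{Val}(v)$ are exactly the computed $b_1 \ldots b_k$. To determine $b_{k+1}$, I query whether $\playerOne$ can ensure a payoff $\succsim \mu_k := b_1 \ldots b_k\, 1\, 0^{n-k-1}$ from $v$. If the answer is yes, then $\mathsf{Val}(v) \succsim \mu_k$; combined with the invariant this forces $b_{k+1} := 1$. If the answer is no, then by Corollary~\ref{cor:deter} (applied as in Part (1)) $\playerTwo$ can ensure a payoff $\prec \mu_k$, so $\mathsf{Val}(v) \prec \mu_k$; combined with the invariant this forces $b_{k+1} := 0$. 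In either case the invariant extends to $k+1$ bits. After $n$ queries we have read off $\mathsf{Val}(v)$ at cost $n \cdot \C$.

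The only delicate step is checking that the invariant genuinely constrains $\mathsf{Val}(v)_{k+1}$ in each case, which reduces to the elementary observation that if the first $k$ bits of two binary strings coincide then the lexicographic comparison is decided by the $(k{+}1)$-st bit. Everything else is bookkeeping, and the complexity bound is immediate since each iteration performs a single threshold query on the same game graph with a threshold derived in constant time from the previous answer.
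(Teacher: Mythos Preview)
Your proof is correct and follows essentially the same approach as the paper: both parts establish value-determinacy by taking the maximum threshold $\playerOne$ can ensure and invoking determinacy (Corollary~\ref{cor:deter}) for the successor threshold, and then compute $\mathsf{Val}(v)$ bit by bit via $n$ calls to the threshold oracle on thresholds of the form $b_1\ldots b_k 1 0^{n-k-1}$. Your treatment is in fact slightly more careful than the paper's, since you handle the boundary case $\mu^* = 1^n$ explicitly and spell out the invariant for the bit-by-bit procedure.
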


\begin{proof}
Let us show first that the lexicographic game $(G,\Obj_1,\ldots,\Obj_n)$ is value-determined. To this end, we use the following Folk property: if there exists some $\alpha \in \{0,1\}^n$ and two strategies $\sigma_1^* \in \Sigma_1, \sigma_2^* \in \Sigma_2$ such that $\payoff (\Out(v,\sigma_1,\sigma_2^*)) \precsim \alpha \precsim \payoff (\Out(v,\sigma_1^*,\sigma_2))$ for all strategies $\sigma_1 \in \Sigma_1, \sigma_2 \in \Sigma_2$, then $\alpha =  \mathsf{Val}(v)$ and $\sigma_1^*, \sigma_2^*$ are optimal strategies from $v$. Let $v$ be a vertex. The set of thresholds $\mu$ is partitioned between the two players according to whether $\playerOne$ (resp. $\playerTwo$) can ensure a payoff $\succsim \mu$ (resp. $\prec \mu+1$) from $v$ by Corollary~\ref{cor:deter}. Let $\alpha$ be the highest threshold that $\playerOne$ can ensure and $\sigma_1^*$ be the corresponding winning strategy. By definition of $\alpha$, $\playerTwo$ can ensure a payoff $\prec \mu +1$ with a winning strategy $\sigma_2^*$. It follows that $\payoff (\Out(v,\sigma_1,\sigma_2^*)) \precsim \alpha \precsim \payoff (\Out(v,\sigma_1^*,\sigma_2))$ for all strategies $\sigma_1 \in \Sigma_1, \sigma_2 \in \Sigma_2$, and therefore we have $\mathsf{Val}(v) = \alpha$.

When the \tproblem\ is decidable, the procedure to compute the value $\mathsf{Val}(v)$ works as follows. The idea is to solve the threshold problem for different thresholds from vertex $v$ in a way to compute the highest threshold $\mu$ for which $\playerOne$ can ensure a payoff $\succsim \mu$. This threshold $\mu$ is the value $\mathsf{Val}(v)$.

First, we test whether $\playerOne$ can ensure a payoff $\succsim 10^{n-1}$ from $v$. If this is the case, we set bit $\mu_1$ to $1$ and  to $0$ otherwise. Then, for $i \in \{2,\ldots,n\}$, we successively test whether $\playerOne$ can ensure a payoff $\succsim \mu_1 \ldots \mu_{i-1} 1 0^{n-i}$ from $v$ and we set bit $\mu_i$ to $1$ if this is the case and to $0$ otherwise.
Thus, after those $n$ solutions to the \tproblem, we obtain a threshold $\mu = \mu_1 \ldots \mu_n$ for which $\playerOne$ can ensure a payoff $\succeq \mu$ from $v$. The complexity of the algorithm computing $\mu$ is thus in $n \cdot \C$. By using again the previous Folk property with the computed $\mu$, we have that $\mu = \mathsf{Val}(v)$.
 
This concludes the proof. \qed
\end{proof}

\begin{remark}\label{rem:valeur} 
When the procedure given in the proof of Proposition~\ref{prop:value} computes the value $\mu$ of a given vertex $v$, it also computes at the same time an optimal strategy from this vertex for both players. Indeed, the optimal strategy $\sigma_1^*$ of $\playerOne$ from $v$ is his winning strategy (for the \tproblem) that ensures a payoff $\succsim \mu$ and that the optimal strategy $\sigma_2^*$ of $\playerTwo$ from $v$ is his winning strategy that ensures a payoff $\prec \mu+1$. Notice that in this procedure $\sigma_1^*$ (resp. $\sigma_2^*$) is the winning stategy of $\playerOne$ (resp. $\playerTwo$) for the last bit $\mu_i$ set to $1$ (resp. to $0$).
Therefore, in order to study some properties on optimal strategies (such as memory requirements), it is sufficient to study winning strategies for the \tproblem.
\end{remark}

\begin{example}\label{ex:calculValeur}

Let us consider the lexicographic reachability game $(G,\Obj_1,\Obj_2,\Obj_3,\precsim)$ depicted on Figure~\ref{fig:calculValeur} where $\Obj_1 = \Reach(\{v_1\})$, $\Obj_2 = \Reach(\{v_2,v_4\})$ and $\Obj_3 = \Reach(\{v_5\})$ and $\precsim$ is the lexicographic order. We apply the procedure described in Proposition~\ref{prop:value} to compute $\mathsf{Val}(v_0)$ and the corresponding optimal strategies. For this purpose, we begin by testing whether $\playerOne$ can ensure a payoff $\succsim 100$ from $v_0$. This is not the case since $\playerTwo$ can prevent him from visiting vertex $v_1$ by going from $v_0$ to $v_2$. In particular, this strategy of $\playerTwo$ ensures a payoff $\prec 100$. We fix $\mu_1 = 0$ and we now test whether $\playerOne$ can ensure a payoff $\succsim 010$. This is the case since by going from $v_3$ to $v_4$, $\playerOne$ is guaranteed to visit vertex $v_4$. We thus set $\mu_2 = 1$. The final test made is whether $\playerOne$ can ensure a payoff $\succsim 011$. This is possible with the strategy that consists in going from $v_3$ to $v_5$. Indeed, the two possible outcomes consistent with this strategy are $v_0v_1v_3v_5^{\omega}$ and $v_0v_2v_3v_5^{\omega}$. The payoff of the first play is $101 \succsim 011$ while the latter payoff is $011$. Hence we set $\mu_3=1$ and we get $\mathsf{Val(v_0)} = 011$. The corresponding optimal strategies are to go from $v_3$ to $v_5$ for $\playerOne$ (to ensure a payoff $\succsim 011$) and to go from $v_0$ to $v_2$ for $\playerTwo$ (to ensure a payoff $\prec 100$). Note that the outcome from $v_0$ consistent with those strategies is $v_0v_2v_3v_5^{\omega}$ and that its payoff is indeed $\mathsf{Val}(v_0)$.

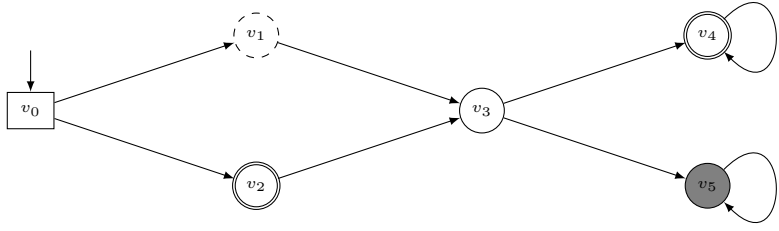
\begin{figure}[h]
\centering
  \begin{tikzpicture}[scale=4]
    \everymath{\scriptstyle}
    \draw (0,0) node [rectangle, inner sep=5pt, draw] (A) {$v_0$};
    \draw (0.75,0.25) node [circle, draw, dashed] (B) {$v_1$};
    \draw (0.75,-0.25) node [circle, draw, double] (C) {$v_2$};
	\draw (1.5,0) node [circle, draw] (D) {$v_3$};
    \draw (2.25,0.25) node [circle, draw, double] (E) {$v_4$};
    \draw (2.25,-0.25) node [circle, draw, fill=gray] (F) {$v_5$};
    
    \draw[->,>=latex] (A) to (B);
    \draw[->,>=latex] (A) to (C);
    \draw[->,>=latex] (C) to (D);
    \draw[->,>=latex] (B) to (D);
    \draw[->,>=latex] (D) to (E);
    \draw[->,>=latex] (D) to (F);     
    
    \draw[->,>=latex] (E) .. controls +(45:0.4cm) and +(315:0.4cm) .. (E);
    \draw[->,>=latex] (F) .. controls +(45:0.4cm) and +(315:0.4cm) .. (F);
	\path (0,0.2) edge [->,>=latex] (A);    
    
    \end{tikzpicture}
\caption{The value of vertex $v_0$ is equal to 011.}
\label{fig:calculValeur}
\end{figure}
\end{example}

\paragraph{\bf Proof of Part (2) of Theorem~\ref{thm:valueAndMemory}.~}
Thanks to Remark~\ref{rem:valeur}, in order to prove the second part of Theorem~\ref{thm:valueAndMemory}, it is sufficient to study the memory requirements of winning strategies as those of optimal strategies are the same. Upper bounds on the memory sizes are obtained by analyzing the various reductions done in the proof of Theorem~\ref{thm:lexiFPT} in the case of a preorder with a compact embedding. Lower bounds for lexicographic B\"uchi and co-B\"uchi games are obtained thanks to a reduction from generalized B\"uchi games, and for the other lexicographic games thanks to the reductions proposed in the proof of Part (2) of Theorem~\ref{thm:homogeneoushyp}.

\begin{proof}[of Part (2) of Theorem~\ref{thm:valueAndMemory}]
By Remark~\ref{rem:valeur}, we only study the memory requirements of winning strategies for the threshold problem. The proof is split into two parts dealing first with the upper bounds and then with the lower bounds.

Concerning the upper bounds, we first recall that for any ordered game with a compact embedding, the proof of Theorem~\ref{thm:lexiFPT} yields a reduction of
\begin{itemize}
\item lexicographic reachability (resp. safety) games to \unioninter\ reachability (resp. \unioninter\ safety) games,
\item lexicographic B\"uchi (resp. co-B\"uchi) games to (resp. the complement of) generalized B\"uchi games,
\item lexicographic explicit Muller games to explicit Muller games and
\item lexicographic parity, Rabin, Streett, and Muller games to Boolean B\"uchi games.
\end{itemize}
Those reductions do not modify the initial game structure and winning strategies for the games obtained by the reductions are winning strategies for the \tproblem\ of the original lexicographic games. As linear memory (resp. memoryless) strategies are sufficient for $\playerOne$ (resp. $\playerTwo)$ in generalized B\"uchi games and exponential memory strategies are sufficient for both players in Explicit Muller, Boolean B\"uchi, \unioninter\ reachability, and \unioninter\ safety games by Theorem~\ref{thm:gameresults}, we obtain the expected upper bounds.

Concerning the lower bounds, we rely on the reductions proposed in the proof of Part (2) of Theorem~\ref{thm:homogeneoushyp}.
\begin{itemize}
\item As there is a reduction from solving generalized reachability (resp. explicit Muller, Muller) games to solving the \tproblem\ for lexicographic reachability and safety (resp. explicit Muller, Muller) games, exponential memory is necessary for both players by Theorem~\ref{thm:gameresults}.
\item There is a reduction from solving games the objective of which is a union of a Rabin and a Streett objective to the threshold problem for lexicographic parity games. Thus, the latter problem is harder than solving both Rabin and Streett games, which implies that exponential memory is necessary for both players in lexicographic parity games by Theorem~\ref{thm:gameresults}. This is also the case for lexicographic Rabin and Streett games, since parity objectives are a special case of Rabin (Streett) objectives by Item~\ref{item:parityRabin} of Proposition~\ref{rem:objectives}.
\item It remains to show that $\playerOne$ (resp. $\playerTwo$) needs linear memory in lexicographic B\"uchi (resp. co-B\"uchi) games. This is obtained thanks to Theorem~\ref{thm:gameresults} and the following reductions from generalized B\"uchi games. Let $(G,\Obj)$ with $\Obj = \GenBuchi(U_1,\ldots,U_n)$
\begin{itemize}
\item B\"uchi case: We have that  $\playerOne$ is winning in $(G,\Obj)$ from $v_0$ if and only if $\playerOne$ can ensure a payoff $\succeq 1^n$ from $v_0$ in the lexicographic B\"uchi game $(G,\Buchi(U_1),\ldots,\Buchi(U_n),\precsim)$. 
\item Co-B\"uchi case: Note that any play $\pi$ belongs to $\GenBuchi(U_1,\ldots,U_n)$ if and only if $\pi \not\in \CoBuchi(U_i^c)$ for all $i$, i.e. $\payoff(\pi) = 0^n$. Hence, $\playerOne$ is winning in $(G,\Obj)$ from $v_0$ if and only if, taking on the role of $\playerTwo$, he can ensure a payoff $\prec \mu = 0^{n-1}1$ in the lexicographic co-B\"uchi game $(G,\CoBuchi(U_1^c),\ldots,$ $\CoBuchi(U_n^c),\precsim)$.
\end{itemize}
\end{itemize}
This finishes the proof. \qed
\end{proof}
\section{Conclusion}

In this paper, we have studied the parameterized complexity of the threshold problem for monotonically ordered games with $\omega$-regular objectives when the set of objectives is taken as a parameter. This latter result is particularly relevant as in practice, the number of objectives is usually restricted. We have also studied the special case of the lexicographic order, and given a full ficture of the study of lexicographic games. In particular, we have shown how to compute the values in lexicographic games.

As future work, we would like to investigate notions of equilibria for those games, as well as subgame perfection: a strategy is subgame perfect if it ensures the maximal value that is achievable in every subgame. Also, we would like to study lexicographic games with quantitative objectives.

\medskip\noindent
{\bf Acknowledgments.~} We would like to thank Antonia Lechner for useful discussions.

\bibliographystyle{abbrv}
\bibliography{biblio}

\end{document}